\crefname{figure}{Fig.}{Figs.}
\crefname{algorithm}{Protocol}{Protocols}
\definecolor{codegreen}{rgb}{0,0.6,0}
\definecolor{codegray}{rgb}{0.5,0.5,0.5}
\definecolor{codepurple}{rgb}{0.58,0,0.82}
\definecolor{backcolour}{rgb}{0.95,0.95,0.92}
\lstdefinestyle{mystyle}{
    backgroundcolor=\color{backcolour},   
    commentstyle=\color{codegreen},
    keywordstyle=\color{magenta},
    numberstyle=\tiny\color{codegray},
    stringstyle=\color{codepurple},
    basicstyle=\ttfamily\footnotesize,
    breakatwhitespace=false,         
    breaklines=true,                 
    captionpos=b,                    
    keepspaces=true,                 
    numbers=left,                    
    numbersep=5pt,                  
    showspaces=false,                
    showstringspaces=false,
    showtabs=false,                  
    tabsize=2
}
\DeclarePairedDelimiterX\ketbra[2]{\vert}{\vert}%
  {#1\kern0.15ex\delimsize\rangle\delimsize\langle\kern0.15ex\mathopen{}#2}
\newcommand{\kb}[2]{\ketbra{#1}{#2}}
\newcommand{\inFid}{\mathrm{inFid}}
\newcommand{\pfail}{p_\mathrm{fail}}
\newcommand{\one}{\mathbbm{1}}
\newcommand{\iS}{\mathrm{s}}
\newcommand{\iH}{\mathrm{h}}
\newcommand{\iT}{\mathrm{t}}
\newcommand{\CC}{\mathbb{C}}
\newcommand{\RR}{\mathbb{R}}
\newcommand{\Znn}{\mathbb{Z}_{\geq 0}} 
\newcommand{\Zp}{\mathbb{Z}_{>0}} 
\renewcommand{\H}{\mathcal{H}}
\newcommand{\ii}{\ensuremath\mathrm{i}}
\newcommand{\e}{\ensuremath\mathrm{e}} 
\newcommand{\veps}{\varepsilon}
\renewcommand{\Pr}{\mathbb{P}} 
\newcommand{\T}{\intercal} 
\newcommand{\vbt}[1]{{\ttfamily #1}} 
\newcommand{\argdot}{\,\cdot\,}
\newcommand{\y}{\mathrm{y}} 
\newcommand{\n}{\mathrm{n}} 
\renewcommand{\a}{\mathrm{a}} 
\renewcommand{\r}{\mathrm{r}} 
\newcommand{\A}{\mathcal{A}} 
\newcommand{\eps}{\epsilon} 
\newcommand{\M}{\mathcal{M}} 
\newcommand{\QM}{\mathcal{Q}} 
\newcommand{\LL}{\mathrm{L}} 
\newcommand{\ceil}[1]{\lceil #1 \rceil}
\newcommand{\imax}{{i_{\mathrm{max}}}}
\let\Set\undefined
\providecommand\given{}
\newcommand\SetSymbol[1][]{%
  \nonscript\:#1\vert
  \allowbreak
  \nonscript\:
  \mathopen{}}
\DeclarePairedDelimiterX\Set[1]\{\}{%
  \renewcommand\given{\SetSymbol[\delimsize]}
  #1
}
\DeclarePairedDelimiterX{\abs}[1]{\lvert}{\rvert}{%
  \ifblank{#1}{\,\cdot\,}{#1}
}
\newtheorem{corollary}{Corollary}
\newtheorem{definition}{Definition}
\newtheorem{observation}{Observation}
\newtheorem{result}{Result}
\newtheorem{example}{Example}
\newtheorem{proposition}{Proposition}
\crefname{definition}{Def.}{Defs.}
\crefname{section}{Sec.}{Secs.}
\crefname{appendix}{Appendix}{Appendix} 
\definecolor{anna}{rgb}{0.0, 0.6, 0.0}
\definecolor{jan}{rgb}{0.6, 0, 1}
\definecolor{mari}{rgb}{1.0, 0.0, 0.22}
\definecolor{lucas}{rgb}{0.0, 0.4, 0.6}
\newacro{DFA}{deterministic finite automaton}
\newacro{pvDFA}{promise-version deterministic finite automaton}
\newacro{pvPFA}{promise-version probabilistic finite automaton}
\newacro{PFA}{probabilistic finite automaton}
\newacro{QFA}{quantum finite automaton}
\newacro{RB}{randomized benchmarking}
\newacro{GST}{gate set tomography}
\newacro{POVM}{positive operator-valued measure}
\newacro{PVM}{projection-valued measure}
\newacro{CP}{completely positive}
\newacro{CPTP}{completely positive trace preserving}
\newacro{PSD}{positive semidefinite}
\newacro{NISQ}{noisy intermediate-scale quantum}
\newacro{SPAM}{state preparation and measurement} 
\newacro{ONB}{orthonormal basis}
\newacro{SDI}{semi-device independent}
\newacro{QSQ}{quantum system quizzing}
\newacro{MUBs}{mutually unbiased bases}
\DeclareMathOperator{\Tr}{Tr} 
  \def\mathtt{}%
  \def\texttt{}%
  \def\ac{}%
\begin{document}

\title{Certifying Quantum Gates via Automata Advantage}
\author{Anna Schroeder}
\affiliation{Merck KGaA, Darmstadt, Germany}
\affiliation{Department of Computer Science, Technical University of Darmstadt, Darmstadt, Germany}
\email{anna.schroeder@merckgroup.com}

\author{Lucas B. Vieira}
\affiliation{Department of Computer Science, Technical University of Darmstadt, Darmstadt, Germany}

\author{Jan N\"oller}
\affiliation{Department of Computer Science, Technical University of Darmstadt, Darmstadt, Germany}

\author{Nikolai Miklin}
\affiliation{Institute for Quantum-Inspired and Quantum Optimization, Hamburg University of Technology, Germany}
\affiliation{Institute for Applied Physics, Technical University of Darmstadt, Darmstadt, Germany}

\author{Mariami Gachechiladze}
\affiliation{Department of Computer Science, Technical University of Darmstadt, Darmstadt, Germany}
\maketitle

\begin{abstract}
There is growing interest in developing rigorous tests of quantumness that are feasible even before practical quantum advantages become a reality.  Such tests not only aim to certify the quantum nature of a system but also serve as benchmarks for precise quantum control.
In this work, we argue that promise problems, studied in the theory of finite automata, provide a natural framework for designing sound tests of quantum gate quality. 
Soundness, the property that only implementations of sufficiently high quality can pass the test, is a central requirement for meaningful certification. We study several promise problems relevant to quantum gate testing and establish separations between the memory resources required by quantum and classical finite automata to solve them. These separations form the theoretical basis for using promise problems as tests of quantumness.
Finally, we show how results from automata theory, in particular the minimality of automata, can be used to derive soundness guarantees.
\end{abstract}

\section{Introduction}
Quantum computing offers significant advantages in the resources required to perform computational tasks, with time being the most commonly considered resource~\cite{mermin2007quantum}. 
However, this is not the only form of savings that quantum mechanics can offer. 
In particular, encoding information in the state of a quantum system can reduce the amount of working memory needed during computation~\cite{rosset2018resource, giarmatzi2021witnessing, heinosaari2024simple, vieira2024entanglement}.

In the current era of \ac{NISQ} devices, where harnessing the full potential of quantum computing to solve practical problems remains out of reach, there is an ongoing race to demonstrate quantum computational advantage~\cite{arute2019quantum,madsen2022quantum,morvan2024phase,kretschmer2025demonstrating}. 
The central theoretical objective here is to design \emph{tests of quantumness} that can, under a set of easily acceptable assumptions, verify that a given computation was indeed performed on a quantum device.
Another objective of such tests is to certify that the implemented quantum operations, such as logical quantum gates, are of sufficiently high quality. 
In the latter case, it is sufficient to perform smaller-scale computations that demonstrate quantum advantage, e.g., in terms of memory savings, without necessarily surpassing the capabilities of today's classical computers.

Recently, some of us proposed a method for testing the implementation of quantum gates by executing a small number of quantum circuits whose outputs are expected to be deterministic~\cite{noller2025classical,noller2025sound}.
In contrast to other approaches in widespread use~\cite{emerson2005scalable,levi2007efficient, emerson2007symmetrized,merkel2013self,blume2013robust,nielsen2021gate}, the proposed method can be rigorously proven to provide a \emph{sound} certification of quantum gates, meaning that implementations of insufficient quality are guaranteed to be detected.
The key ingredient for achieving soundness is a restriction on the amount of memory accessible to the computer.
This assumption is crucial, as a classical computer with sufficiently large memory can reproduce the results of any quantum computation, given that there are no further restrictions such as time.

In this work, we revisit the type of computational tasks on which the sound certification of Refs.~\cite{noller2025classical,noller2025sound} is based and show that these are instances of the so-called \emph{promise problems} solved with finite-state automata.
Promise problems, which generalize the concept of decision problems in computer science~\cite{even1984complexity,goldreich2006promise,zheng2017promise}, are known to represent a class of problems where quantum mechanics provides an advantage in terms of the memory required for computation~\cite{ambainis2012superiority,ambainis2021automata, yakaryilmaz2009languages,bhatia2019quantum,gruska1999quantum, bianchi2014complexity, qiu2016quantum}. 
Interestingly, many celebrated quantum algorithms, such as Deutsch-Jozsa~\cite{deutsch1992rapid} and Simon's~\cite {simon1997power}, whose complexity is provably smaller than any classical analogs, also solve instances of promise problems.

This paper is organized as follows.
In Section~\ref{sec:preliminaries}, we introduce the necessary concepts from automata theory and argue that promise problems provide a natural framework for designing sound tests for quantum logical operations.
We then formalize, as a promise problem, the task used in Ref.~\cite{noller2025classical} to test a single-qubit phase gate, and discuss several generalizations of it in Section~\ref{sec:separation}.
For each case, we prove a separation in the amount of memory required for classical and quantum automata to solve the problem.
As further examples of promise problems, we consider those that naturally arise in the context of operations on quantum states involving a finite group of gates, such as the single-qubit Clifford group.

When analyzing the minimal amount of memory required by a classical automaton to solve a given promise problem, in some cases, one is also able to argue about its uniqueness.
This, in turn, has implications for the certification of logical operations in a quantum automaton solving the same problem, which we discuss in Section~\ref{sec:uniqueness}.
An important generalization of promise problems involves their restricted versions, where the length of the input is finite, a necessary condition for making the corresponding quantum gate tests applicable in practice, discussed in Section~\ref{sec:QSQ}.
Finally, we examine how the choice of this input length influences the performance of the certification test.
Section~\ref{sec:summary} presents the summary and outlook, outlining the future potential of promise problem tests in quantum information processing tasks.
Some technical details are left for the Appendix.

\section{Preliminaries}
\label{sec:preliminaries}
We begin by introducing the notation used throughout this paper.
Let $\Znn$ and $\Zp$ denote the sets of non-negative and positive integers, respectively.
For any $n \in \Zp$, we use the shorthand notation $[n] = \Set{1, 2, \dots, n}$.
By $\Sigma$ we denote a finite set of symbols, referred to as the \emph{alphabet}.
We use $\Sigma^\ast$ to denote the set of all finite-length strings, called \emph{words}, formed from symbols in $\Sigma$.
The empty word is denoted by $\eps$.
For a symbol $\sigma \in \Sigma$ and $n\in \Znn$, we write $\sigma^n$ for the word consisting of $n$ repetitions of $\sigma$, with the convention that $\sigma^0 = \eps$.
We primarily follow Ref.~\cite{zheng2017promise} in introducing automata theory concepts needed for this paper.

\begin{definition}\label{def:promise_problem}
    A promise problem over $\Sigma$ is a pair $(A_\y,A_\n)$ of disjoint subsets of $\Sigma^\ast$.
\end{definition}
We refer to $A_\y$ and $A_\n$ as the ``yes'' and ``no'' languages, respectively. 
In the promise problem formulation of language recognition, the objective is to decide correctly on inputs from $A_\y \cup A_\n$, in contrast to standard decision problems, where $A_\n=\Sigma^\ast\setminus A_\y$.

In this work, we consider finite automata as the classical model of computation for solving promise problems.
\begin{definition}\label{def:pvDFA}
    A \ac{pvDFA} is specified by a $6$-tuple 
    \begin{equation}
        \A = (S, \Sigma, \delta, s_0, S_\a, S_\r),
    \end{equation}
    where $S$ is a finite set of states, $\Sigma$ is the alphabet, $\delta: S\times \Sigma \to S$ is a transition function, $s_0 \in S$ is the initial state, and $S_\a$ and $S_\r$ are disjoint subsets of $S$, called accept and reject states, respectively.
\end{definition}
\Cref{def:pvDFA} generalizes the concept of \ac{DFA} commonly studied in the computer science literature~\cite{hopcroft2001introduction}.
For a standard \ac{DFA}, we have $S_\r=S\setminus S_\a$.
The transition function of a \ac{DFA} is commonly specified by a state diagram, which is a directed graph with vertices representing states and arrows representing the transitions.
In the case of \ac{pvDFA}, apart from the accept states, depicted by double circles, we also need to specify the reject states, which we draw as filled circles (see \cref{fig:sautomaton}, (Right)). Empty vertices correspond to unlabeled states.

It is common to extend the definition of the transition function to $\delta: S\times \Sigma^\ast\to S$ in a recursive manner $\delta(s,w\sigma) = \delta(\delta(s,w),\sigma)$ for $s\in S$ and $w\in\Sigma^\ast$.

\begin{definition}\label{def:DFA_solve}
    We say that a promise problem $A = (A_\y, A_\n)$ is solved by a \ac{pvDFA} $\A$ if for every $w \in A_\y \cup A_\n $ we have,
    \begin{equation}\begin{split}\label{eq:def_DFA_solve}
        w \in A_\y &\Rightarrow \delta(s_0, w) \in S_\a,\\
        w \in A_\n  &\Rightarrow \delta(s_0, w) \in S_\r.
    \end{split}\end{equation}
\end{definition}
Next, we define the quantum counterpart to \ac{pvDFA}.
\begin{definition}\label{def:QFA}
Let $\H$ be a finite-dimensional Hilbert space.
A \ac{QFA} $\M$ is specified by a $6$-tuple
\begin{equation}\label{eq:defQFA}
    \mathcal{M} = (Q, \Sigma, \{U_\sigma\}_{\sigma \in \Sigma}, \ket{\psi_0}, Q_\a,Q_\r) ,
\end{equation}
where $Q$ is an orthonormal basis of $\H$, called the set of states, $\Sigma$ is the alphabet, $U_\sigma$ is a unitary operator on $\H$ for each $\sigma\in \Sigma$, $\ket{\psi_0}\in \H$ is the initial state and, $Q_\a$ and $Q_\r$ are disjoint subsets of $Q$, called accept and reject states, respectively.
\end{definition}
Various types of \acp{QFA} have been proposed in the past~\cite{kondacs1997power,moore2000quantum,brodsky1999characterizations, ambainis2006algebraic, li2009characterizations, zheng2017promise}.
The definition given above corresponds to the automaton commonly known as the promise version measure-once quantum finite automaton.
A small distinction, however, is that we do not consider additional symbols that are added at the beginning and the end of an input word, which allows a \ac{QFA} to perform additional unitary transformations. 
Instead, we do not require that $\ket{\psi_0}\in Q$, which makes our definition equivalent to the one in Ref.~\cite{zheng2017promise}.
In \ac{QFA} literature, it is common to consider only pure states and unitary transformations, which will also be sufficient for most of the discussions in this work. 
To avoid ambiguity with the global phase, by a quantum state we will mean an equivalence class $\alpha\ket{\psi}$ for $\alpha\in \CC$ with $\abs{\alpha}=1$, with $\ket{\psi}$ being its representative.

The evolution of a \ac{QFA} state is discrete, as in the case of a \ac{DFA}.
When the automaton reads a symbol $\sigma \in \Sigma$ while in state $\ket{\psi}$, it updates its state by applying the corresponding unitary transformation $U_\sigma$, resulting in the new state $U_\sigma \ket{\psi}$.
However, an important difference with \acp{DFA} is that it might be that $U_\sigma\ket{\psi}\notin Q$. 
For this reason, when we talk about a \ac{QFA} accepting or rejecting a word $w$, we need to talk about the probabilities of these events. 
These probabilities are given by the Born rule,
\begin{equation}\label{eq:def_pr_accept}
    \Pr[\M \text{ accepts } w ] =
    \sum_{\ket{\psi}\in Q_\a}\abs{\bra{\psi}U_w\ket{\psi_0}}^2,
\end{equation}
and similarly for $\Pr[\M \text{ rejects } w ]$ when the summation is taken over $Q_\r$.
In \cref{eq:def_pr_accept} we introduced a notation $U_w$, which we also define recursively $U_{w\sigma} = U_\sigma U_w$ for $w\in \Sigma^\ast$ and $\sigma\in \Sigma$.
Finally, we can define what it means for a \ac{QFA} to solve a promise problem.
\begin{definition}\label{def:QFA_solve}
    A promise problem $A = (A_\y,A_\n)$ is solved by a \ac{QFA} $\M$ with an error probability $\veps$ if for every $w\in A_\y\cup A_\n$ we have,
    \begin{equation}\begin{split}\label{eq:def_QFA_solve}
        w \in A_\y &\Rightarrow \Pr[\M \text{ accepts } w ]\geq 1-\veps,\\
        w \in A_\n  &\Rightarrow \Pr[\M \text{ rejects } w ]\geq 1-\veps.
    \end{split}\end{equation}
    When the error $\veps$ is not explicitly mentioned, it is assumed to be $0$.
\end{definition}
The central notion of the following section is the \emph{minimal automaton} for a given promise problem, defined as a \ac{pvDFA} (or \ac{QFA}) with the smallest number of states that solves the problem.

\begin{figure}[t!]
    \centering
    \includegraphics[width=0.8\linewidth]{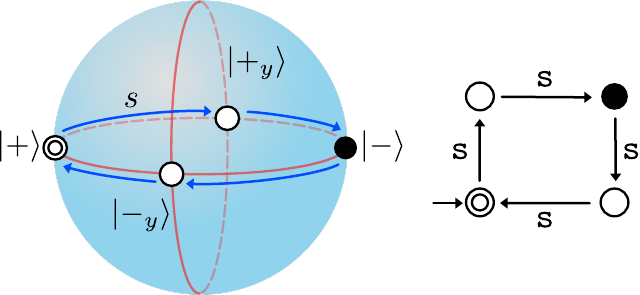}
    \caption{(Left) A qubit \ac{QFA} solving $\mathtt{EO}^1$. (Right) The state diagram of a \ac{pvDFA} solving the same promise problem, which requires $4$ states. The initial state is marked by an incoming arrow, the accept state by a double circle, and the reject state by a filled circle.}
    \label{fig:sautomaton}
\end{figure}

\section{Separation between classical and quantum automata on promise problems}~\label{sec:separation}
It is known that the minimal number of states needed to solve a promise problem can differ for \acp{QFA} and \acp{DFA}~\cite{ambainis2012superiority,ambainis2021automata, yakaryilmaz2009languages,bhatia2019quantum,gruska1999quantum, bianchi2014complexity, qiu2016quantum}.
Here, we focus on promise problems relevant for the certification of quantum gates.
In Ref.~\cite{noller2025sound}, some of us have shown that an implementation of a single-qubit phase gate $\sqrt{Z} = \kb{0}{0}+\ii\kb{1}{1}$ by a quantum computer can be certified if it outputs \vbt{accept} for words $\{\eps,\iS^4\}$ and \vbt{reject} for $\{\iS^2\}$, where $\Sigma=\{\iS\}$, and we assume its memory to be a qubit. 
The reader can recognize that the above sequences are instances of a promise problem $(A_\y,A_\n)$, with 
\begin{equation}\begin{split}\label{eq:EO_s}
    A_\y& =\Set{\iS^{2i}\given i\in \Znn, i\equiv 0\bmod 2},\\
    A_\n& =\Set{\iS^{2i}\given i\in \Znn, i\equiv 1\bmod 2},
\end{split}\end{equation}
and a \ac{QFA} $(\Set{\ket{+},\ket{-}},\Set{\iS},\Set{\sqrt{Z}},\ket{+},\Set{\ket{+}},\Set{\ket{-}})$ can solve it, where $\ket{\pm} = (\ket{0}\pm\ket{1})/\sqrt{2}$.

We begin by reviewing a more general form of this promise problem, which has been independently studied in the literature on finite automata.
\begin{example}\label{ex:EO}
    Let $k\in \Zp$ and $\Sigma=\Set{\sigma}$.
    Consider a promise problem $\mathtt{EO}^k = (\mathtt{EO}_\y^k, \mathtt{EO}_\n^k)$, defined by
    \begin{equation}\begin{split}\label{eq:EO_def}
    \mathtt{EO}_\y^k &= \Set{\sigma^{i2^k} \given i \in\Znn,  i\equiv 0\bmod 2},\\
    \mathtt{EO}_\n^k &= \Set{\sigma^{i2^k} \given i \in\Znn,  i\equiv 1\bmod 2}.
\end{split}\end{equation}
\end{example}
The above promise problem was first introduced in Ref.~\cite{ambainis2012superiority}, and as one can see, the promise problem in \cref{eq:EO_s} corresponds to the special case $k=1$.
Surprisingly, a qubit is still enough to solve this more general problem, while the number of states of a \ac{pvDFA} must grow \emph{exponentially} in $k$.
\begin{result}[\cite{ambainis2012superiority}]\label{res:EO}
There is a \ac{QFA} with $2$ states that solves $\mathtt{EO}^k$.
A minimal \ac{pvDFA} that solves $\mathtt{EO}^k$ has $2^{k+1}$ states.
\end{result}
We only need one modification (except for taking $\Sigma=\Set{\sigma})$ to the \ac{QFA} solving the promise problem in \cref{eq:EO_s}, namely, we need to take $U_\sigma = \sqrt[2^k]{Z} = \kb{0}{0}+\e^{\frac{\pi\ii}{2^k}}\kb{1}{1}$.
A proof of a minimal \ac{pvDFA} can be found in Ref.~\cite{ambainis2012superiority} or~\cite{bianchi2014complexity}, and we also give it in \cref{app:EO_proof} for completeness.
The \acp{DFA} state diagrams together with the Bloch-sphere representations of the quantum state evolutions of the \acp{QFA} for $k=1$ and $k=2$ are given in \cref{fig:sautomaton} and \cref{fig:tautomaton}, respectively.
There, we use symbols $\iS$ and $\iT$, as these cases correspond to $\sqrt{Z}$ and $\sqrt[4]{Z}$ gates (often referred to as $S$- and $T$-gates), which are relevant in quantum computing applications.
\begin{figure}
    \centering
    \includegraphics[width=0.9\linewidth]{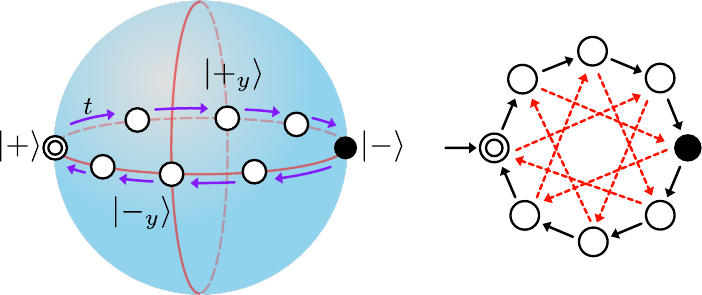}
    \caption{(Left) A qubit \ac{QFA} solving $\mathtt{EO}^2$. (Right) The state diagram of a \ac{pvDFA} solving the same promise problem, which requires $8$ states. Red dashed lines indicate an alternative transition function for the same set of states.}
    \label{fig:tautomaton}
\end{figure}

Next, we discuss an extension of $\mathtt{EO}^1$ to non-unary alphabets.
\begin{example}\label{ex:DIOF}
    Let $k\in \Zp$ and $\abs{\Sigma}=k$. Consider a promise problem $\mathtt{DIOF}^k=(\mathtt{DIOF}^k_\y,\mathtt{DIOF}^k_\n)$, defined by
    \begin{equation}
    \begin{split}
       \mathtt{DIOF}^k_\y &= \Set*{ w \in \Sigma^\ast \given \sum_{\sigma\in\Sigma}a_\sigma\abs{w}_{\sigma} \equiv 0 \bmod{2^{k+1}}}, \\
        \mathtt{DIOF}^k_\n &= \Set*{ w \in \Sigma^\ast \given \sum_{\sigma\in\Sigma}a_\sigma\abs{w}_{\sigma} \equiv 2^k \bmod{2^{k+1}}},
        \end{split}
    \end{equation}
    where to each symbol we associate a weight $a_\sigma$, with all weights forming a tuple $(1,2^1,\dots, 2^{k-1})$, and $\abs{w}_\sigma = \abs{\Set{w_i=\sigma \given i\in [\ell]}}$ denotes the number of symbols $\sigma$ in a word $w=w_1w_2\dots w_\ell$.
\end{example}
The case $k=1$ corresponds to $\mathtt{EO}^1$. 
In Ref.~\cite{noller2025classical}, some of us have shown that $\sqrt{Z}$ and $\sqrt[4]{Z}$ gates can be certified (as part of a single-qubit universal gate set) from the ability of a quantum computer to correctly output \vbt{accept} and \vbt{reject} to a certain set of input words.
The framework of promise problems allows us to identify that the words that need to be tested are instances of $\mathtt{DIOF}^2$.
Indeed, take $\Sigma=\Set{\iS,\iT}$, $a_\iT=1$ and $a_\iS=2$, then a qubit \ac{QFA} with the initial and the accept state $\ket{+}$, the reject state $\ket{-}$, and $U_\iS=\sqrt{Z}$, $U_\iT=\sqrt[4]{Z}$ can solve it.
For a \ac{pvDFA} to solve it, it needs $8$ states, with the state diagram given in~\cref{fig:diofst}.
\begin{figure}[t!]
    \centering
    \includegraphics[width=0.4\linewidth]{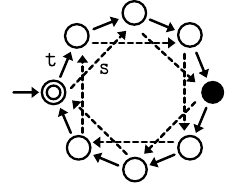}
    \caption{The state diagram of a \ac{pvDFA} solving $\mathtt{DIOF}^{2}$ with $\Sigma=\{\iS,\iT\}$. The solid arrows indicate transitions for $\iT$, while the dashed arrows represent transitions for $\iS$.}
    \label{fig:diofst}
\end{figure}

The separation between \acp{QFA} and \acp{DFA} in the number of states required to solve $\mathtt{DIOF}^k$ was studied in Ref.~\cite{bianchi2014complexity}.
We restate this result below using our notation and definitions.
\begin{result}[\cite{bianchi2014complexity}]
    There is a \ac{QFA} with $2$ states that solves $\mathtt{DIOF}^k$. A minimal \ac{pvDFA} that solves $\mathtt{DIOF}^k$ has $2^{k+1}$ states.
\end{result}
It is easiest to describe a proof if we take $\Sigma=[k]$.
A \ac{QFA} with $Q_\a=\Set{\ket{+}}$, $Q_\r=\Set{\ket{-}}$, the initial state $\ket{+}$ and the unitaries $U_j = \kb{0}{0}+\e^{\frac{\pi \ii}{2^{k-j+1}}}\kb{1}{1}$ for $j\in [k]$ can solve $\mathtt{DIOF}^k$.
Note that in Ref.~\cite{bianchi2014complexity}, the authors choose to give an example of a \ac{QFA} with $\abs{Q}=3$, because they consider a more general variant of the promise problem $\mathtt{DIOF}^k$.
A proof of a minimal \ac{pvDFA} can be found in Ref.~\cite{bianchi2014complexity}.
We also provide a short proof below.
\begin{proof}
    If we consider $\mathtt{EO}^k$ defined over the alphabet $\Sigma=\Set{1}$, then $\mathtt{EO}^k_\y\subset\mathtt{DIOF}^k_\y$ and $\mathtt{EO}^k_\n\subset\mathtt{DIOF}^k_\n$, which means that a \ac{pvDFA} which solves $\mathtt{DIOF}^k$ can also solve $\mathtt{EO}^k$, and thus due to \cref{res:EO}, we must have $\abs{S}\geq 2^{k+1}$.
    Now, we need to show that this number of states suffices.
    Take a \ac{pvDFA} with $S=\Set{s_i}_{i=1}^{2^{k+1}}$ and the transition function $\delta(s_i,1)=s_{i+1}$ for $i\in [2^{k+1}-1]$ and $\delta(s_{2^{k+1}},1)=s_1$ that solves $\mathtt{EO}^k$.
    Now extend the transition function to other symbols $j\in [k]$ according to their weight $a_j=2^{j-1}$: $\delta(s_i,j)=s_{i+2^{j-1}}$, if $i+2^{j-1}\leq 2^{k+1}$, and $\delta(s_i,j)=s_{i+2^{j-1}-2^{k+1}}$, otherwise.
\end{proof}

The next example of a promise problem that we consider arises in the certification of a single-qubit Clifford gate set.
In particular, consider a set of gates $\Set{\sqrt{Z},H}$, where $H = \kb{+}{0}+\kb{-}{1}$ is the Hadamard gate.
In Refs.~\cite{noller2025classical,noller2025sound}, some of us identified a finite set of words sufficient to certify these gates when acting on the initial state $\ket{+}$. 
To the best of our knowledge, no promise problem associated with this case has been defined in the literature.
One possible reason is that this problem does not appear to admit a concise formulation of its accept and reject languages, as was possible for $\mathtt{EO}^{k}$ and $\mathtt{DIOF}^{k}$.

\begin{example}\label{ex:Cl}
    Let $\Sigma=\Set{\iS,\iH}$ and $U_\iS=\sqrt{Z}$, $U_\iH=H$. Consider a promise problem $\mathtt{Cl}=(\mathtt{Cl}_\y, \mathtt{Cl}_\n)$, where
    \begin{equation}\begin{split}\label{eq:def_Cl}
        \mathtt{Cl}_\y &= \Set*{w\in \Sigma^\ast \given \abs{\bra{+}U_w\ket{+}}=1},\\
        \mathtt{Cl}_\n &= \Set*{w\in \Sigma^\ast \given \abs{\bra{-}U_w\ket{+}}=1}.
    \end{split}\end{equation}
\end{example}

We refer to such promise problems, defined on the group structure generated by quantum gates acting on a specific initial state, as quantum-inspired promise problems.

\begin{result}\label{def:sh}
    There is a \ac{QFA} with $2$ states that solves $\mathtt{Cl}$. A minimal \ac{pvDFA} that solves $\mathtt{Cl}$ has $6$ states.
\end{result}
By construction of \cref{ex:Cl}, a qubit \ac{QFA} $(\Set{\ket{+},\ket{-}},\Set{\iS,\iH},\Set{\sqrt{Z},H},\ket{+},\Set{\ket{+}},\Set{\ket{-}})$ solves $\mathtt{Cl}$.
To construct a minimal \ac{pvDFA}, we rely on the following general observation.
\begin{observation}\label{obs:map}
    Consider a promise problem $A$ defined over $\Sigma$ and solved by a \ac{QFA} with the initial state $\ket{\psi_0}$ and transformations $\Set{U_\sigma}_{\sigma\in\Sigma}$. If the set of quantum states $\Set{U_w\ket{\psi_0}\given w\in \Sigma^\ast}$ is finite, then there exists a \ac{pvDFA} solving $A$ with the set of states isomorphic to that set.
\end{observation}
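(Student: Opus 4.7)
The plan is to build the \ac{pvDFA} $\A$ directly from the orbit of the initial state. Let $R = \Set{U_w\ket{\psi_0} \given w\in\Sigma^\ast}$ (viewed as equivalence classes up to a global phase), which is finite by hypothesis. I would take $S = R$, with the initial state $s_0 = \ket{\psi_0}$, and define the transition function by $\delta(U_w\ket{\psi_0},\sigma) = U_\sigma U_w\ket{\psi_0} = U_{w\sigma}\ket{\psi_0}$. Well-definedness of $\delta$ follows immediately because unitary action on equivalence classes modulo a global phase is well defined and because the right-hand side still lies in $R$. Note that under this construction, the extended transition function satisfies $\delta(s_0,w) = U_w\ket{\psi_0}$ for every $w\in\Sigma^\ast$.

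Next I would designate the accept and reject sets as the images of the promise classes:
\begin{equation}
S_\a = \Set{U_w\ket{\psi_0}\given w\in A_\y}, \qquad S_\r = \Set{U_w\ket{\psi_0}\given w\in A_\n}.
\end{equation}
Correctness in the sense of \cref{def:DFA_solve} is then automatic by construction, provided these two sets are genuinely disjoint; once they are, every $w \in A_\y \cup A_\n$ is mapped to the correct side.

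The main obstacle, and the step worth spelling out, is the disjointness $S_\a \cap S_\r = \emptyset$. Suppose for contradiction that some state $\ket{\phi}\in R$ belongs to both: there exist $w\in A_\y$ and $w'\in A_\n$ such that $U_w\ket{\psi_0}$ and $U_{w'}\ket{\psi_0}$ coincide up to a global phase. Since the \ac{QFA} solves $A$ with zero error (per \cref{def:QFA_solve} with $\veps=0$), \cref{eq:def_pr_accept} together with $w\in A_\y$ gives
\begin{equation}
\sum_{\ket{\psi}\in Q_\a}\abs{\bk{\psi}{\phi}}^2 = 1,
\end{equation}
so $\ket{\phi}\in\spann(Q_\a)$. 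The same argument applied to $w'\in A_\n$ yields $\ket{\phi}\in\spann(Q_\r)$. But $Q_\a$ and $Q_\r$ are disjoint subsets of the orthonormal basis $Q$, so their spans intersect trivially, forcing $\ket{\phi}=0$, a contradiction with $\ket{\phi}$ being a unit vector.

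Finally, I would observe that the construction achieves the isomorphism claim tautologically: the state set $S$ of $\A$ is by definition the reachable set $R$, and $\delta$ mirrors the unitary dynamics of $\M$ on this set symbol by symbol. States in $R\setminus(S_\a\cup S_\r)$ can simply be left unlabeled, as they are only visited along words outside $A_\y\cup A_\n$, on which the automaton is not required to commit to a decision.
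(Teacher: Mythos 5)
Your construction is exactly the one the paper intends: the paper states this as an Observation without an explicit proof, but its applications (e.g., deriving the $6$-state \ac{pvDFA} for $\mathtt{Cl}$ from the qubit \ac{QFA}) make clear that the canonical orbit construction with states $\Set{U_w\ket{\psi_0}\given w\in\Sigma^\ast}$ is what is meant. Your proof is correct, and the disjointness argument via $\spann(Q_\a)\cap\spann(Q_\r)=\{0\}$ is a worthwhile detail to make explicit.
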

\begin{figure}[t]
\centering
\includegraphics[width=0.5\textwidth]{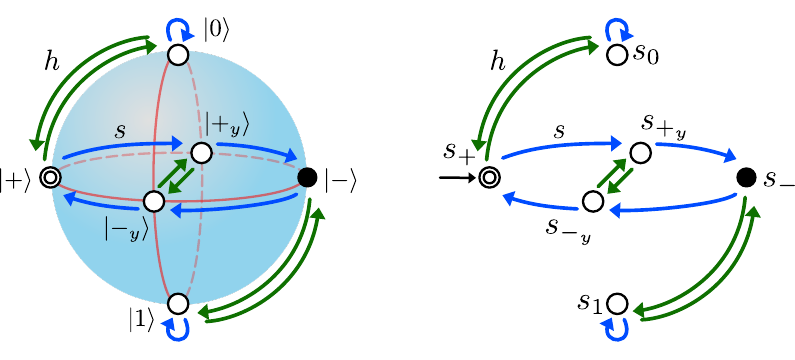}
   \caption{
(Left) A qubit \ac{QFA} solving $\mathtt{Cl}$. (Right) The state diagram of a \ac{pvDFA} solving $\mathtt{Cl}$, which needs $6$ states.}
   \label{fig:shautomaton}
\end{figure}

In \cref{fig:shautomaton}, we give the state diagram of a \ac{pvDFA} solving $\mathtt{Cl}$ with the set of states $S = \Set{s_0,s_1,s_+,s_-,s_{+_y},s_{-_y}}$.
For a proof that this \ac{pvDFA} is minimal, see \cref{app:DFA_sh}. 

\subsection{Generalized promise problems}
In the first part of this section, we observed that promise problems offer a natural framework for testing sets of quantum gates. However, in quantum computing experiments, we often have access to more than just two possible outcomes of a computation. Therefore, it is meaningful to move beyond the standard definitions presented in the Preliminaries section, where promise problems were limited to ``yes''/``no'' languages.
We refer to an ordered set that generalizes $(\y,\n)$ for promise problems and $(\a,\r)$ for automata as a set of \emph{labels}, using the same set and term in both cases.
Since extending \cref{def:promise_problem,def:pvDFA,def:DFA_solve,def:QFA,def:QFA_solve} to label sets containing more than two elements is straightforward, we omit these definitions here.

The first example that we consider generalizes $\mathtt{EO}^k$ to the multi-copy case.
\begin{example}\label{ex:NEO}
    Let $\Sigma = \Set{\sigma_1,\sigma_2, \dots, \sigma_N}$ and the set of labels $\LL=(b)_{b\in\{0,1\}^N}$. Consider a promise problem $\mathtt{N\text{-}EO}^k = (\mathtt{N\text{-}EO}^k_b)_{b\in\LL}$, defined by
    \begin{equation}\label{eq:def_NEO}
        \mathtt{N\text{-}EO}^k_b = \Set{w\in \Sigma^\ast \given \abs{w}_{\sigma_j} \equiv 2^kb_j \bmod 2^{k+1}, j\in [N]},
    \end{equation}
    where $\abs{w}_{\sigma_j} = \abs{\Set{w_i=\sigma_j \given i\in [\ell]}}$ denotes the number of symbols $\sigma_j$ in a word $w=w_1w_2\dots w_\ell$.
\end{example}
Clearly, for $N = 1$, we recover the promise problem $\mathtt{EO}^k$ from \cref{ex:EO}. We show that the exponentially growing separation between the numbers of states in \acp{pvDFA} and the \acp{QFA} solving this promise problem remains to hold, but also the base of the exponent grows exponentially with $N$.

\begin{result}\label{res:NEO}
    There is a \ac{QFA} with $2^N$ states that solves $\mathtt{N\text{-}EO}^k$. A minimal \ac{pvDFA} that solves $\mathtt{N\text{-}EO}^k$ has $2^{N(k+1)}$ states.
\end{result}
Clearly, a multi-qubit generalization of the \ac{QFA} from \cref{ex:EO} with $Q_b= \{\bigotimes_{i=1}^N(\ket{0}+(-1)^{b_i}\ket{1})/\sqrt{2}\}$ for $b\in \LL$, $Q = \bigcup_{b\in \LL} Q_b$, and $U_{\sigma_i} = \sqrt[2^k]{Z}^{(i)}$, which is the unitary $\sqrt[2^k]{Z}$ acting on $i$-th qubit, solves $\mathtt{N\text{-}EO}^k$.
For a proof of a minimal \ac{pvDFA} see \cref{app:multiqubit}.

Our next example is motivated by the dynamics of a single quantum system with $\dim(\H)=q$, where $q\geq 2$. Below, we use the common notation $\omega = \e^{\frac{2\pi\ii}{q}}$ for the $q$-th root of unity, and $\{\ket{k}\}_{k=0}^{q-1}$ for the computational basis.
\begin{example}\label{ex:GEO}
    Let $q$ be a prime number, $r\in \Zp$, $\Sigma=\Set{\sigma}$, and take the set of labels to be $\LL = (0,1,\dots,q-1)$. Consider a promise problem $\mathtt{GEO}^r_q = (\mathtt{GEO}^r_{q,j})_{j\in\LL}$, defined by
    \begin{equation}\label{eq:def_GEO}
       \mathtt{GEO}^r_{q,j} = \Set{\sigma^{i\cdot  r} \given i\in\Znn, i \equiv j \bmod q}, 
    \end{equation}
    for $j \in \LL$.
\end{example}
For $q=r=2$, this problem reduces to $\mathtt{EO}^1$.

\begin{result}\label{res:GEO}
    There is a \ac{QFA} with $q$ states that solves $\mathtt{GEO}^r_q$. A minimal \ac{pvDFA} that solves $\mathtt{GEO}^r_q$ has $q^{m_q + 1}$ states,  where $m_q$ is the highest power of $q$, such that $q^{m_q}$ divides $r$.
\end{result}
An example of a \ac{QFA} that solves $\mathtt{GEO}^r_q$ has $Q = \Set{\ket{x_j}}_{j=0}^{q-1}$, with 
$\ket{x_j}=\frac{1}{\sqrt{q}}\sum_{k=0}^{q-1}\omega^{j k}\ket{k},$
for $j\in \{0,1\dots, q-1\}$. The starting state of the \ac{QFA} is $\ket{x_0}$, and the transitions are given by $U_\sigma=\sqrt[r]{Z_q}=\sum_{k=0}^{q-1} \omega^{\frac{k}{r}}\kb{k}{k}$.
The sets of labeled states are $Q_j=\{\ket{x_j}\}$ for $j\in\LL$. 
For a proof of a minimal \ac{pvDFA} see Appendix~\ref{app:GEORQ}.

\section{Uniqueness of automata solving promise problems}~\label{sec:uniqueness}
In the previous section, we observed that promise problems emerge naturally in the context of testing sets of quantum gates from deterministic outcomes, following the framework of Refs.~\cite{noller2025classical,noller2025sound}.
Here, we deepen this connection by examining the uniqueness of a \ac{pvDFA} and a \ac{QFA} that solve a given promise problem.
Such uniqueness, modulo certain allowable degrees of freedom, plays a central role in proving the soundness of the certification test~\cite{noller2025classical,noller2025sound}.

We look again at the promise problem $\mathtt{EO}^k$. As shown in \cref{res:EO}, any minimal \ac{pvDFA} solving this problem requires $2^{k+1}$ states. However, there is more one can say about the structure of \acp{pvDFA} with $2^{k+1}$ states that solve $\mathtt{EO}^k$.
In the proof of \cref{res:EO}, we establish that the state diagram of every minimal \ac{pvDFA} must form a loop of size $2^{k+1}$, with the initial state being accept and the $2^k$-th state being reject.
Consequently, given any \ac{pvDFA} whose state diagram is such a loop, the state diagram of any other minimal \ac{pvDFA} solving $\mathtt{EO}^k$ (with one accept and one reject state) can be obtained from it by permutation of the unlabeled states, i.e., states which are not the accept or reject one (see, e.g.,~\cref{fig:tautomaton}).
We can use this fact to establish the uniqueness of a \ac{QFA} solving $\mathtt{EO}^k$.


\begin{result}\label{res:QFA_equiv} 
Any \ac{QFA} with $2$ states that solves the promise problem $\mathtt{EO}^k$ is equivalent, up to a choice of \ac{ONB} in $\CC^2$, to the one where the initial state is $\ket{+}$, the sets of accept and reject states are $\Set{\ket{+}}$ and $\Set{\ket{-}}$, respectively, and the transition is realized by a unitary $\kb{0}{0}+\e^{\frac{\pi\ii j}{2^{k}}}\kb{1}{1}$ with $j$ odd integer. 
\end{result}
\begin{proof}
Let $\Sigma=\Set{\sigma}$, $\ket{\psi_0}$ be the initial state of a \ac{QFA} that solves $\mathtt{EO}^k$, and let us denote $\ket{\psi_k}=U_\sigma^k\ket{\psi_0}$ for $k\in \Zp$.
By \cref{def:QFA}, we must have at least one accept and one reject state, which are orthonormal, and for $\abs{Q}=2$, it implies that there are no more accept or reject states. 
Due to \cref{obs:map}, we can always map a \ac{QFA} to a \ac{DFA} with $\abs{\Set{U_w\ket{\psi_0}\given w\in \Sigma^\ast}}$ number of states as long as this number is finite. 
This implies that the set $\{\ket{\psi_k}\}_{k\in \Znn}$ cannot be smaller than $2^{k+1}$, otherwise \cref{res:EO} would not hold.
At the same time, since $\{\sigma^0,\sigma^{2^{k+1}}\}\subset \mathtt{EO}^k_\y$, we must have that $Q_\a = \{\ket{\psi_0}\}$, and $\ket{\psi_{2^{k+1}}}$ is the same state as $\ket{\psi_0}$, which essentially means that $\abs{\Set{U_w\ket{\psi_0}\given w\in \Sigma^\ast}}=2^{k+1}$. 
From the uniqueness of \ac{pvDFA} with $2^{k+1}$ states solving $\mathtt{EO}^k$ we can now infer that $Q_\r = \Set{\ket{\psi_{2^k}}}$.
Since the accept and reject states are orthonormal, we can infer that $U^{2^k}_\sigma = \kb{\psi_{2^k}}{\psi_0}+\e^{\ii\alpha}\kb{\psi_{0}}{\psi_{2^k}}$, for some $\alpha\in \RR$, where we chose the global phase to be $0$. 
With an appropriate change of \ac{ONB} given by a unitary $V=\kb{+}{\psi_0}+\e^{\ii\frac{\alpha}{2}}\kb{-}{\psi_{2^k}}$, we can obtain $V\ket{\psi_0}=\ket{+}$, $V\ket{\psi_{2^k}} = \ket{-}$, and $VU^{2^k}_\sigma V^\dagger = Z$, which implies that $VU_\sigma V^\dagger =\kb{0}{0}+\e^{\frac{\pi\ii j}{2^{k}}}\kb{1}{1}$ for $j$ odd integer.
\end{proof}

For $k = 1$, \cref{res:QFA_equiv} implies that there are two possible \acp{QFA} solving $\mathtt{EO}^1$, up to a choice of basis, with the unitary $U_\sigma$ being either $\sqrt{Z}$ or $\sqrt{Z}^\dagger$. These two implementations are equivalent under complex conjugation, which is another natural degree of freedom in quantum mechanics~\cite{wigner1931gruppentheorie}. Thus, it is possible to certify the implementation of the gate $\sqrt{Z}$ on the initial state $\ket{+}$ in the sense of Ref.~\cite{noller2025classical,noller2025sound}.
However, already for $k = 2$, \cref{res:QFA_equiv} admits four possible transition unitaries $U_\sigma \in \{\sqrt[4]{Z}, \sqrt[4]{Z}^\dagger, Z\sqrt[4]{Z}, \sqrt[4]{Z}^\dagger Z\}$. The complex conjugation degree of freedom reduces this number to two, and no certification in the strict sense is possible without going beyond the framework of promise problems. In general, the freedom of choosing the transition unitary in \cref{res:QFA_equiv} (choosing $j$) corresponds to the freedom of choosing the $2^k$-th root of the $Z$ operator, which we fixed at the beginning of this paper. This is also precisely the degree of freedom that occurs in the uniqueness of the minimal \ac{pvDFA} that solves $\mathtt{EO}^k$.

Arguably, a shorter proof of \cref{res:QFA_equiv} can be constructed by analyzing certain words in $\mathtt{EO}^k$, without relying on the uniqueness of $2^{k+1}$-state \ac{pvDFA}. However, for a more general promise problem with a non-unary alphabet, designing a proof of the uniqueness of \ac{QFA} is a tedious task, and it comprises a large part of the proof of soundness in testing sets of quantum gates~\cite{noller2025classical,noller2025sound}. The connection to \acp{pvDFA} solving the same promise problem provides, in that regard, a more systematic approach.

\section{Applications to certification of quantum gates}\label{sec:QSQ}
To apply the promise-problem framework to testing quantum gates in practice, one must analyze the impact of testing only words of finite length.
For a given promise problem $A = (A_\y,A_\n)$, we call its \emph{restriction} a promise problem $A_\ell = (A_{\y,\ell},A_{\n,\ell})$, where $A_{\y,\ell} = \Set{w\in A_\y\given \abs{w}\leq \ell}$, and $A_{\n,\ell} = \Set{w\in A_\n\given \abs{w}\leq \ell}$, where $\ell$ is the maximal length of the tested  

In Refs.~\cite{noller2025classical,noller2025sound}, some of us have shown that, in certain cases, a finite promise problem can be used to establish the soundness of gate certification under the assumption of known Hilbert space dimension.
This section explores more systematically the choices of the restrictions of promise problems and their effects on gate certification.  

First, we show that the separation between \acp{pvDFA} and \acp{QFA} persists for the restricted promise problem. In this case, however, the required number of states of the minimal solutions depends on the maximal length of the input words. In the following, we discuss the restrictions of $\mathtt{EO}^k$.

\begin{example}\label{ex:EOimax}
    A restricted $\mathtt{EO}^{k}$ is given as $\mathtt{EO}^{k}_\imax = (\mathtt{EO}^{k}_{\imax,\y}, \mathtt{EO}^{k}_{\imax,\n})$, where 
    \begin{equation}\begin{split}
        \mathtt{EO}^{k}_{\imax,\y} &= \{ \sigma^{i2^k} \;|\; i \in \mathbb{Z}_{\geq 0,\,\mathrm{even}} \text{ and } i \leq \imax\},
        \\
       \mathtt{EO}^{k}_{\imax,\n} &= \{ \sigma^{i2^k} \;|\; i \in \mathbb{Z}_{\geq 0,\,\mathrm{odd} } \text{ and } i \leq \imax\}.
    \end{split}\end{equation}
\end{example}

\begin{result}\label{theorem:finitepp}
    A minimal \ac{pvDFA} solving $\mathtt{EO}^{k}_{\imax}$ has $\min\{(\imax+1),2^{k+1}\}$ states.
\end{result}
See \cref{app:KLEO} for a proof. Clearly, the qubit \ac{QFA} that solves $\mathtt{EO}^{k}$ can also solve the finite version of it. 

Now we analyze different choices of restrictions for the quantum gate certification task, for which we will follow the framework of Ref.~\cite{noller2025classical,noller2025sound} for sound gate certification, referred to as \ac{QSQ}.
Unlike the \ac{QFA} formalism, the \ac{QSQ} protocol does not assume perfect (noiseless) state preparation, transformations, or measurements defined on a Hilbert space of a fixed dimension. Instead, QSQ aims to \emph{certify the closeness} of an implemented (noisy) quantum model
\begin{equation}\label{eq:qmodel}
    \QM = (\rho, \{\Lambda_\sigma\}_{\sigma\in\Sigma}, \{M_\a\}_{a \in \LL})
\end{equation}
to a target noiseless quantum model in a chosen distance measure, taking into account the (anti-)unitary degree of freedom inherent to quantum theory~\cite{wigner1931gruppentheorie}.
Here, $\rho$ denotes an initial quantum state, $\{\Lambda_\sigma\}_{\sigma\in\Sigma}$ is a set of quantum channels, and $\{M_\a\}_{a \in \LL}$ are measurement operators given as a \ac{POVM}, with $\LL$ the set of labels in the promise problem.
For models with unitary channels, we use the corresponding unitary operators in the definition of a quantum model in \cref{eq:qmodel}.

\begin{definition}~\label{def:modelequivalence}
Two fixed-dimensional quantum models $\QM = (\rho,\{\Lambda_\sigma\}_{\sigma\in \Sigma},\{M_\a\}_{a\in \LL})$ and $\tilde{\QM} = (\tilde\rho,\{\tilde\Lambda_\sigma\}_{\sigma\in \Sigma},\{\tilde M_\a\}_{a\in \LL})$ are equivalent if, up to a complex conjugation applied to all components, there exists a unitary $U$ such that $U\tilde\rho U^\dagger = \rho$, $U\tilde M_\a U^\dagger = M_\a$ for all $a \in \LL$ and $ U\tilde\Lambda_\sigma(U^\dagger (\argdot) U)U^\dagger=\Lambda_\sigma(\argdot)$ for all $\sigma\in \Sigma$. 
\end{definition}

In the following, we focus our analysis of promise problems in the \ac{QSQ} protocol to instances of $\mathtt{EO}^1_\imax$ for different lengths $\imax$. For each instance, we conduct an in-depth noise robustness analysis where we connect the infidelity between an implemented and the target quantum models with their expected probability of failing the promise problem.

\subsection{Robustness analysis for QSQ}

We define infidelity between two $d$-dimensional quantum models $\QM$ and  $\tilde{\QM}$ as 
\begin{equation}\label{eq:inFid}
	\inFid = \min_{U} \max \{\inFid^U_\rho,\, \inFid^U_\Lambda,\, \mathrm{dist}^U_M\}, 
\end{equation}
where the state infidelity is given by  $\inFid^U_\rho= 1 - \Tr[U \rho U^\dagger \tilde{\rho}]$, the measurement by the operator norm, $\mathrm{dist}^U_M = \max_a \| U M_a U^\dagger - \tilde{M}_a \|_\infty$, and the channel by the average gate infidelity~\cite{kliesch2021theory},
\begin{equation}\label{eq:inFid_gate}
	\inFid^U_\text{gate} = \tfrac{d}{d+1}\left( 1 - \tfrac{1}{d^2} \Tr[\mathcal{C}_{\Lambda_U} {\mathcal{C}_{\tilde{\Lambda}}}^\dagger] \right),
\end{equation}
where $\mathcal{C}_{\Lambda}$ is an unnormalized Choi state of the channel $\Lambda$, and $\Lambda_U$ is a channel $\Lambda$, and $U$ is an (anti-)unitary transformation.

In order to ensure the advantage of a quantum model over a classical one, we compare their probabilities of success under a fixed dimension. 
As the classical counterpart to \ac{QFA} we consider a probabilistic version of \ac{pvDFA}, called 
\ac{pvPFA}.
A \ac{pvPFA} is defined analogously to a \ac{pvDFA}, with the difference that transitions between its states are represented by $T = \{T_\sigma \}_{\sigma \in \Sigma}$, a set of $|S| \times |S|$ column-stochastic matrices, i.e., such that $[T_\sigma]_{ij} \ge 0$ and $\sum_{i} [T_\sigma]_{ij} = 1$ for all $j \in S$, and the starting state is chosen according to a probability distribution $\pi$. The probability of the word $w = \sigma_1 \dots \sigma_\ell$, sampled from a promise problem, producing the answer $a \in \LL$ can then be written as
\begin{equation}\label{eq:pfa_accprob}
\Pr[a|w] = m_a T_{\sigma_\ell} \cdots T_{\sigma_1} \pi
\end{equation}
where $m_a$ is a row vector indicating the states which are labeled $a$, with $[m_a]_i = 1$ if $i \in S_a$, and zero otherwise. In quantum models, this probability is calculated by Born's rule,
\begin{equation}\label{eq:qfa_accprob}
\Pr[a|w] =\Tr[ M_a \Lambda_{\sigma_\ell} \circ\dots \circ \Lambda_{\sigma_1}(\rho) ].
\end{equation}

For a restricted promise problem $A = (A_a)_{a\in \LL}$, we define the probability of failure
\begin{equation}\label{eq:pfail}
    \pfail = 1 - \sum_{a\in\LL}\sum_{w \in A_a} \mu(w) \Pr[a|w],
\end{equation}
where $\mu$ is the sampling distribution over all the words in  the \ac{QSQ} protocol.
In what follows, we will denote by $p_C=\min_{\text{pvPFA}} \pfail$, where the minimization is taken over all possible \acp{pvPFA} with a fixed number of states.

\begin{figure*}
\centering
\begin{minipage}[b]{0.49\textwidth}
\includegraphics[width=\textwidth]{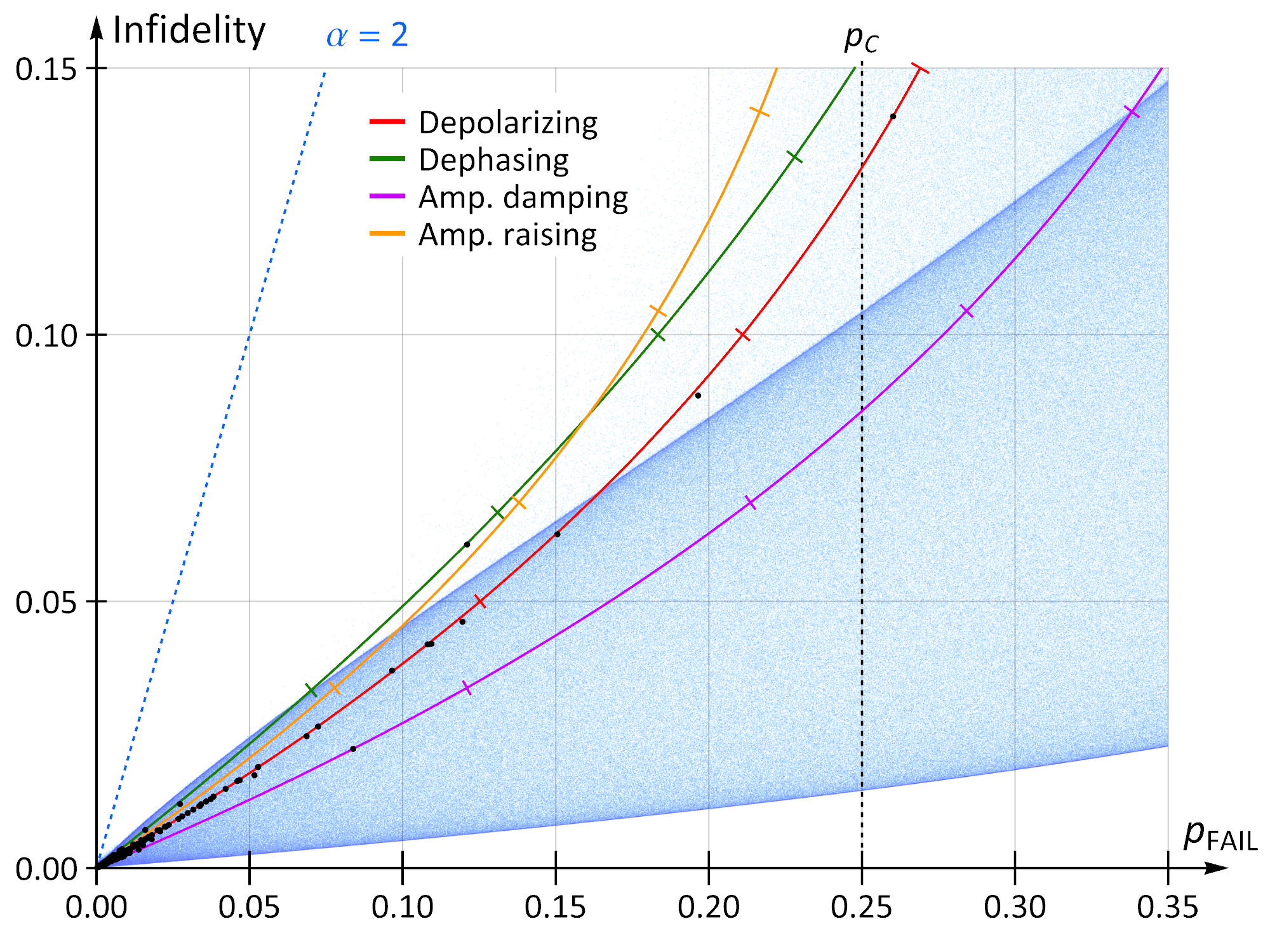}
\end{minipage}\hfill
\begin{minipage}[b]{0.49\textwidth}
\includegraphics[width=\textwidth]{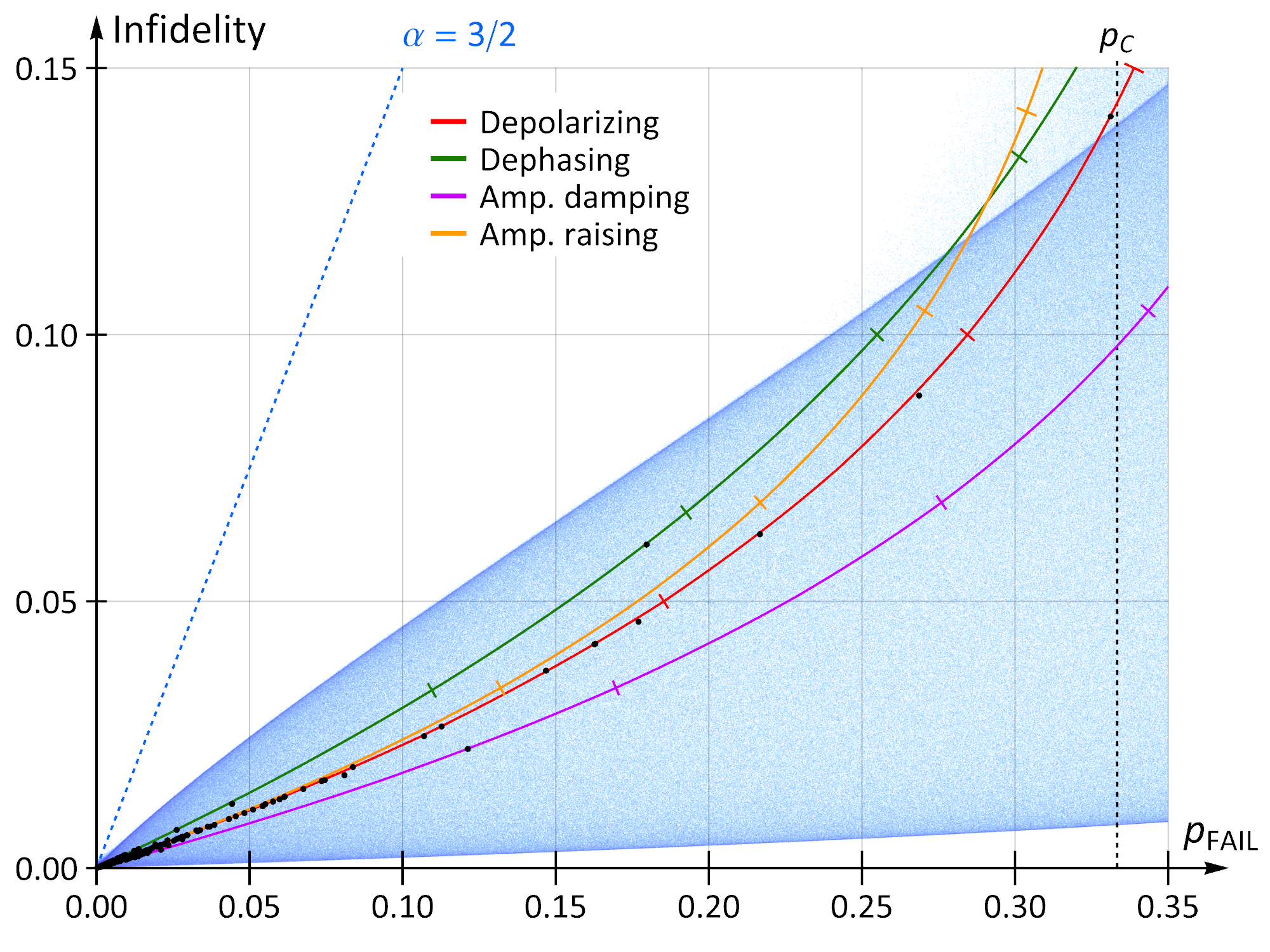}
\end{minipage}
\caption{Numerical survey of qubit channels for the promise problems $\mathtt{EO}^1_3$ (left) and $\mathtt{EO}^1_5$ (right). Blue regions each depict around $100$ million qubit channels, positioned according to their probability of failure and infidelity, with the noiseless $\sqrt{X}$ gate at the origin. The soundness guarantee of \ac{QSQ} can be seen as a sharp upper boundary near the origin. Blue dashed line shows a conservative worst-case infidelity upper bound, with slope $\alpha$ given by the optimal classical strategy with $\pfail = p_C$ and $\inFid = 1/2$. Curves show the behavior of the $\sqrt{X}$ gate under different noise models, with tick marks every 10\% increment for the noise parameter $t$. Black dots depict noisy $U_s$ channels for every available qubit of every simulated backend in IBM's quantum platform, placed according to their reported noise models.}\label{fig:cptp_survey}
\end{figure*}

\subsection{Numerical analysis for the S gate}

We investigate the \ac{QSQ} robustness for the restricted promise problems $\mathtt{EO}^1_\imax$ for $\imax=3$ and $\imax=5$,  which require a $4$-state \ac{pvDFA} to be solved, but which can be solved exactly by a qubit under the action of the $\sqrt{Z}$ gate. Here, we choose the target quantum model $(\ketbra{0}{0}, \{\sqrt{X}\}, \{\ketbra{a}{a}\}_{a \in \{0,1\}})$, with $\sqrt{X} = \kb{+}{+}+\ii\kb{-}{-}$ instead, which is equivalent to the \ac{QFA} solving the problem in \cref{eq:EO_s} under a unitary gauge. The minimum classical probability of failure $p_C$ is $1/4$ for $\imax=3$ and $2/3$ for $\imax=5$
 for to $\mu$ being uniform distribution (see \cref{app:proof_pfa} for details).

Since we want to investigate how the noise robustness of quantum gates changes with varying $\imax$, in what follows, we assume perfect state preparation and measurements in the implemented model, a quantum model of the form $(\ketbra{0}{0}, \{ \Lambda_s \}, \{\ketbra{a}{a}\}_{a \in \{0,1\}})$. 
In order to obtain tight upper bounds on the infidelity, the unitary gauge in \cref{eq:inFid} must be optimized globally, which generally poses a challenging nonconvex optimization problem.
Nevertheless, it is important to note that \cref{eq:inFid} gives a conservative worst-case upper bound for the infidelity, even if the (anti-) unitary found numerically is not optimal.
In particular, one finds that the quantum models realizing the optimal classical strategy have $\inFid = \tfrac{1}{2}$, which establishes a general worst-case infidelity upper bound $\inFid \le \alpha \, \pfail$ for a given $\pfail$ in this protocol, with $\alpha = \tfrac{1}{2 p_C}$.

The results of the numerical survey are shown in \cref{fig:cptp_survey}, together with the particular noise models described below. In the plots, the faint blue regions jointly depict around $200$ million randomly sampled quantum models, each a point $(\pfail,\inFid)$, covering every possible quantum behavior that can be certified in the protocol. The convergence of the models near the origin visually demonstrates the strong correlation between the infidelity of a model and its probability of failure in the \ac{QSQ} protocol. In particular, the soundness of \ac{QSQ} is evidenced by the sharp upper boundary of the blue regions near the origin. The dense triangular regions correspond predominantly to unitary channels, and as $\pfail$ increases, a faint distribution of non-unitary channels can also be seen above it, sloping upwards towards the classical strategy at $(p_C,\tfrac{1}{2})$.

The robustness of a \ac{QSQ} protocol to particular types of noise can be demonstrated by investigating their effects on $\pfail$ and $\inFid$ for a given promise problem. A noisy model can be obtained by applying a noise channel after the noiseless target gate, i.e., $\Lambda_\sigma = \mathcal{N}_t \circ U_\sigma$. Here, we study \ac{QSQ} robustness under the effect of common noise models, as well as the simulated noise models offered by IBM's quantum computing platform.

First, we consider the effects of depolarizing ($\mathcal{N}_t^\Delta$), dephasing ($\mathcal{N}_t^\Phi$), amplitude damping ($\mathcal{N}_t^0$), and amplitude-raising ($\mathcal{N}_t^1$) channels, with a noise parameter $t \in [0,1]$. They are defined as,
\begin{equation}
\begin{split}
    \mathcal{N}_t^\Delta(\rho) &= (1 - t) \rho + t \Tr[\rho] \frac{\one}{2}, \\
    \mathcal{N}_t^\Phi(\rho) &= (1 - \tfrac{t}{2}) \rho + \tfrac{t}{2} Z \rho Z, \\
    \mathcal{N}_t^{0}(\rho) &= K_0 \rho K_0^\dagger + K_1 \rho K_1^\dagger, \\
    \mathcal{N}_t^{1}(\rho) &= \tilde{K}_0 \rho \tilde{K}_0^\dagger + \tilde{K}_1 \rho \tilde{K}_1^\dagger,
\end{split}
\end{equation}
with $\tilde{K}_i = X K_i X$ and 
\begin{equation}
    K_0 = \begin{bmatrix}1 & 0 \\ 0 & \sqrt{1-t}\end{bmatrix}, \; K_1 = \begin{bmatrix}0 & \sqrt{t} \\ 0 & 0\end{bmatrix}. 
\end{equation}
The amplitude-raising channel realizes the optimal \ac{pvPFA} strategy for $t = 1$.
In addition, each noise curve could be obtained analytically through the computer algebra software Wolfram Mathematica~\cite{Mathematica}, allowing an asymptotic analysis for the low-noise regime $t \approx 0$ where the infidelity scales linearly as $\inFid = \alpha \, \pfail$, with $\alpha$ as given in \cref{tab:noise_slopes}. As observes, the scaling improves as the restricted promise problem length increases. 

\begin{table}[h!]
	\centering
	\begin{tabular}{l|ccc}
		\textbf{Noise channe}l & $\imax=3$ & $\imax=5$ \\
		\hline
		Depolarizing & 1/3 & 1/5 \\
		Dephasing & 4/9 & 4/15 \\
		Amplitude Damping & 8/33 & 8/51 \\
        Amplitude Raising & 8/21 & 8/39 \\
	\end{tabular}
	\caption{Asymptotic slopes $\alpha$ in the low-noise regime.}
	\label{tab:noise_slopes}
\end{table}

To investigate the robustness of \ac{QSQ} under more realistic noise, we use the simulated noise models offered by IBM's quantum computing platform under Qiskit~\cite{qiskit2024}. These models are generated using the calibration information of real IBM quantum processing units and are routinely used to mimic the behaviors of the actual physical hardware. For our analysis, we selected all IBM fake backends having $\sqrt{X}$ as the native gate, and extracted the corresponding noise channel for every available qubit. The $(\pfail,\inFid)$ values were computed for each of the resulting quantum models, and are depicted in \cref{fig:cptp_survey} as black dots.

Our numerical survey shows that the restricted promise problems of varying length have a rich structure for quantum gate certification. The longer the tested sequences get, the noise parameter resolution also increases. Finally, the overall slope for the worst-case noise also decreases from $\alpha=2$ to $\alpha=3/2$ as we change from $\imax=3$ to $\imax=5$.


\begin{table*}[t]
    \centering
    \begin{tabular}{l|ccccc}
        Promise Problem & $\abs{\Sigma}$ & \ac{QFA} operators & $ \abs{Q} $ & $\abs{S}$ & Ref.~within this work \\ 
        \hline
        $\mathtt{EO}^k$ & 1 & $\sqrt[2^k]{Z}$ & $2$ & $2^{k+1}$ &\cref{ex:EO}\\ 
        $\mathtt{DIOF}^k$ & $k$ &$\{Z,\sqrt{Z},\sqrt[4]{Z},\dots\}$& $2$ &$2^{k+1}$& \cref{ex:DIOF}\\
        $\mathtt{Cl}$ & $2$ & $\{\sqrt{Z},H\}$& $2$ & $6$ &\cref{ex:Cl}\\
        $\mathtt{N\text{-}EO}^k$ & $N$ & $\sqrt[2^k]{Z}^{(i)}$& $2^N$ & $2^{(k+1)N}$&\cref{ex:NEO}\\
        $\operatorname{\mathtt{GEO}^r_q}$ & 1 &$\sqrt[r]{Z_q}$ & $q$ & $q^{m_q+1} $&\cref{ex:GEO}\\
        $\mathtt{EO}^k_{\imax}$ & 1 &$\sqrt[2^k]{Z}$& $2$& $\min\{ 2^{k+1} , \imax+1\}$ & \cref{ex:EOimax}\\
    \end{tabular}
    \caption{Overview of the considered promise problems, their alphabet size, unitary operators of \ac{QFA}, minimal number of states of a \ac{QFA} and a \ac{pvDFA}, and the corresponding reference in the text.}
    \label{tab:PP-overview}
\end{table*}

\section{Summary and Outlook}~\label{sec:summary}

In this work, we argue that promise problems offer a natural framework for designing sound tests for sets of quantum gates.
We support this claim by showing that the tasks underlying the framework of quantum system quizzing~\cite{noller2025classical,noller2025sound} are instances of promise problems.
Interestingly, promise problems have previously been studied in the context of finite automata, where they serve to demonstrate separations between quantum and classical automata in terms of the required number of states, often exhibiting an exponential gap with respect to a parameter of a promise problem.
Here, we revisit several well-known examples of promise problems and discuss their relevance to gate-set certification.
We further introduce new promise problems, including ones naturally arising from gate sets used in quantum computing.
To strengthen the link between automata theory and gate certification, we discuss the structural uniqueness of automata that solve promise problems.
We also investigate how restricting the word length in promise problems affects the robustness of gate certification.

Finally, we argue that restricted promise problems can serve as novel but fruitful tasks in quantum information theory to demonstrate quantum advantage over classical counterparts. We give a brief example to spark future works: it can be shown that qubit entanglement-breaking channels can outperform classical \acp{pvPFA} in the same dimension for the simplest promise problem $\mathtt{EO}^1_2$. Here, the minimal probability of failure for a \ac{pvPFA} is $p_C=1/3$, while the entanglement-breaking channel, defined as follows
\begin{equation}
    \Lambda(\rho) = \frac{1}{2}\sum_{\tau\in\{\pm,\pm_y\}}(\bra{\tau}\rho\ket{\tau})S\ket{\tau}\bra{\tau}S^\dagger
\end{equation}
achieves the success probability of $\frac{1}{3}(1+5/8+17/32)\approx 0.712>2/3$. Note that the quantum entanglement breaking strategy can be further optimized. We note that similar quantum advantages with entanglement-breaking channels have also been reported in Ref.~\cite{vieira2024ebc} in a sequential scenario involving repeated measurements, although in that case, advantages were only observed beyond two-level systems and were harder to derive.

\begin{acknowledgements}
    We thank Oskari Kerppo, Martin Kliesch, Robert Koenig,  and Dan Brown for useful discussions.
    This research was funded by the Deutsche Forschungsgemeinschaft (DFG, German Research Foundation), project numbers 441423094, 236615297 - SFB 1119 and the Fujitsu Germany GmbH as part of the endowed professorship ``Quantum Inspired and Quantum Optimization''.
    This project was funded within the QuantERA II Programme that has received funding from the EU's H2020 research and innovation programme under the GA No 101017733. 
\end{acknowledgements}


\onecolumn

\section*{Appendix}
\appendix


\section{Minimal pvDFA solving $\mathtt{EO}^k$}\label{app:EO_proof}
\begin{proposition}
    A minimal \ac{pvDFA} that solves $\mathtt{EO}^k$ has $2^{k+1}$ states.
\end{proposition}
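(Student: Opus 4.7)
The sufficiency direction is easy and I would dispatch it first by exhibiting an explicit pvDFA. Take $S = \{s_0, s_1, \dots, s_{2^{k+1}-1}\}$, initial state $s_0$, transition $\delta(s_i,\sigma) = s_{(i+1) \bmod 2^{k+1}}$, and the labeled sets $S_\a = \{s_0\}$, $S_\r = \{s_{2^k}\}$. A direct computation gives $\delta(s_0,\sigma^{i\cdot 2^k}) = s_{(i\cdot 2^k) \bmod 2^{k+1}}$, which equals $s_0$ when $i$ is even and $s_{2^k}$ when $i$ is odd, so this automaton solves $\mathtt{EO}^k$ and uses exactly $2^{k+1}$ states.

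\textbf{Necessity via periodicity analysis.} For the lower bound I would exploit the unary alphabet. Let $\A$ be any pvDFA solving $\mathtt{EO}^k$ and consider the sequence $(s_i)_{i \in \Znn}$ with $s_i := \delta(s_0,\sigma^i)$. Since $S$ is finite, this sequence is eventually periodic: there exist $t \in \Znn$ and $p \in \Zp$ such that $s_{i+p} = s_i$ for all $i \geq t$, with $p$ the minimal such period. The number of distinct states visited along one full period is exactly $p$, and the tail contributes at least $t$ additional states, so $\abs{S} \geq t + p \geq p$.

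\textbf{Pinning down the period.} I would then focus on the subsequence $(s_{i\cdot 2^k})_{i \in \Znn}$. Because $\A$ must correctly answer on every word $\sigma^{i \cdot 2^k}$, the even-$i$ terms land in $S_\a$ and the odd-$i$ terms land in $S_\r$, and disjointness of $S_\a, S_\r$ forces $s_{i \cdot 2^k} \neq s_{(i+1)\cdot 2^k}$ for $i$ large enough while $s_{i \cdot 2^k} = s_{(i+2)\cdot 2^k}$. Hence the minimal period of this subsampled sequence equals $2$. A standard elementary argument then gives the identity $p / \gcd(p, 2^k) = 2$. Writing $p = 2^a m$ with $m$ odd, the equation $2^{a-\min(a,k)} m = 2$ rules out $a \leq k$ (which would force $m = 2$, contradicting $m$ odd) and forces $a = k+1$, $m = 1$, i.e.\ $p = 2^{k+1}$. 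Combined with $\abs{S} \geq p$, this yields $\abs{S} \geq 2^{k+1}$.

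\textbf{Anticipated obstacle.} The construction side is routine; the mildly delicate step is the period identity $p/\gcd(p,2^k) = 2$, which requires stating precisely how subsampling an eventually periodic sequence interacts with the underlying period, and verifying that the minimal period of the subsequence is indeed $2$ (rather than $1$) — this is what the promise structure and the disjointness $S_\a \cap S_\r = \emptyset$ guarantee. Beyond that, everything is a short number-theoretic manipulation.
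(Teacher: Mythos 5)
Your construction for the upper bound coincides with the paper's. For the lower bound you take a genuinely different route: the paper works directly with the tail-plus-loop decomposition and, for each candidate loop length $l<2^{k+1}$ (split into the cases $l$ odd and $l=2^{n}p$ with $p$ odd), exhibits an explicit pair of words, one from each promise set, that land on the same state; you instead subsample the state sequence along multiples of $2^{k}$ and extract a divisibility constraint on the loop length. Your route is viable and has the advantage of handling the tail uniformly (the paper's treatment of $t\neq 0$ is only sketched), but one step is false as stated: the minimal period of the subsequence $(s_{i\cdot 2^{k}})_{i}$ need not be $2$, i.e., $s_{i\cdot 2^{k}}=s_{(i+2)\cdot 2^{k}}$ can fail. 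Take the pure loop of length $2^{k+2}$ with $S_\a=\{s_0,s_{2^{k+1}}\}$ and $S_\r=\{s_{2^{k}},s_{3\cdot 2^{k}}\}$: this pvDFA solves $\mathtt{EO}^k$, yet the subsampled sequence cycles through four distinct states, so its minimal period is $4$ and $p/\gcd(p,2^{k})=4\neq 2$. Nothing in the problem forces $S_\a$ and $S_\r$ to be singletons, so the identity $p/\gcd(p,2^{k})=2$ cannot be derived for an arbitrary solving automaton (and is indeed false for this one).

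The repair is small. Disjointness of $S_\a$ and $S_\r$ only yields that the subsequence alternates between the two sets, hence that its minimal period is \emph{even}; since the loop states of a deterministic unary automaton are pairwise distinct, that minimal period equals $p/\gcd(p,2^{k})$ exactly, so $p/\gcd(p,2^{k})$ is even. Writing $p=2^{a}m$ with $m$ odd, evenness of $2^{a-\min(a,k)}m$ forces $a\geq k+1$, hence $p\geq 2^{a}\geq 2^{k+1}$, which is all the proposition requires. With that one correction your argument is complete, and arguably cleaner than the paper's case analysis.
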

We translate the proof from  Ref.~\cite{bianchi2014complexity} to our notation.
\begin{proof}
    Since \acp{pvDFA} only allow deterministic transitions, i.e., a unique transition leaving each state, every unary \ac{pvDFA} consists of an initial ``tail'' with length $t$ before entering a ``loop'' with length $l$, such that $t + l = |S|$. $t$, as well as $l$, may be set to zero, although in the case of input strings of length larger than $|S|$, $l$ must be at least one to ensure well-defined outputs for such strings. Such a unary \ac{pvDFA} implements a modulo $l$ operation with offset $t$ on the length of the input string. Hence, $\mathtt{EO}^k$ is solved if a \ac{pvDFA} $\A$ can ensure 
    \begin{equation}\label{eq:modcond}
        \{ (i2^k - t)\bmod{l} \,|\, i \in \Znn, i\equiv 0\bmod 2 \} \cap
        \{ (i2^k - t)\bmod{l} \,|\, i \in \Znn, i\equiv 1\bmod 2 \} = \emptyset .
    \end{equation} 
First we show that a \ac{pvDFA} with $l = 2^{k+1}$ and $t=0$ indeed solves $\mathtt{EO}^k$. 

Since,
\begin{equation}\begin{split}
    \{ i\cdot2^k\bmod{2^{k+1}} \,|\, i \in \Znn, i\equiv 0\bmod 2 \} &= 
    \{j \cdot 2^{k+1} \bmod{2^{k+1}} \,|\, j \in \mathbb{Z}_{\geq 0} \} = \{0, 0, \dots\},  \\
    \{ i\cdot2^k\bmod{2^{k+1}} \,|\, i \in \Znn, i\equiv 1\bmod 2 \} &= 
    \{(j \cdot 2^{k+1} - 2^{k})\bmod{2^{k+1}} \,|\, j \in \mathbb{Z}_{\geq 0} \} = \{2^{k}, 2^{k}, \dots\},    
\end{split}\end{equation} 
\cref{eq:modcond} is clearly fulfilled and a \ac{pvDFA} $\mathcal{A}$ with $S_\a = \{s_0\}$ and  $S_\r = \{s_{2^k}\}$ categorizes the tested input strings correctly.

Next, let us study tail-free \acp{pvDFA}  with $|S| < 2^{k+1}$. First, consider the case of $l$ odd. The input sequence with length $l \cdot 2^{k}$ is contained in $\mathtt{EO}_\n^{k}$. Since $l \cdot 2^{k} \bmod{l} = 0$ but $0 \in \mathtt{EO}_\y^{k}$ always, this fails to solve the promise problem.
For $l$ even, we factorize $l$ into $l = 2^{n} \cdot p$, where $p$ consists of only odd factors. If $p = 1$, $l$ divides $2^k$ such that sequences from both sets will be sent to the accepting state. For $p \neq 1$, we observe that the sequence with length $p\cdot2^k = p \cdot 2^n \cdot 2^{k-n} \in \mathtt{EO}_\n^{k}$ which can be divided by $l$, causing a contradiction, too. Hence, no tail-free \acp{pvDFA}  with  $|S| < 2^{k+1}$ solve this promise problem.

In the following, we assume $t \neq 0$. The tail has two effects on the outputs of the \ac{pvDFA}. $t$ introduces a constant offset in the outputs of the \ac{pvDFA} $i \cdot 2^k - t \bmod l = i \cdot 2^k\bmod l - t \bmod l$ and secondly sends sequences with length $\leq t$ to distinct states which can be labeled freely. By applying similar argumentations as the tail-free case, we can show that there is no \ac{pvDFA} with $|S| < 2^{k+1}$ solving the promise problem.

First, consider $l$ odd. Then, $l \cdot 2^{k} \in \mathtt{EO}_\n^{k}$ and  $l \cdot 2^{k} - t \bmod l = - t \bmod l$. Likewise, we know that $2 \cdot l \cdot 2^{k} \in \mathtt{EO}_\y^{k}$ and $2\cdot l \cdot 2^{k} - t \bmod l = - t \bmod l$.
If $l$ even, we again consider $l = 2^{n} \cdot p$. Then, $p \cdot 2^n \cdot 2^{k-n} - t \bmod l = - t \bmod l$. However $2 \cdot p \cdot 2^n \cdot 2^{k-n} \in \mathtt{EO}_\y^{k}$ with same output $- t \bmod l$.
The smallest \ac{pvDFA} solving $\mathtt{EO}^k$ has indeed $|S| = 2^{k+1}$ with $l = 2^{k+1}$. A \ac{pvDFA} with tail $t \neq 0$ still requires a loop with $l = 2^{k+1}$.
\end{proof}

\section{Minimal pvDFA solving $\mathtt{Cl}$}
\label{app:DFA_sh}

\begin{proposition}
    A minimal \ac{pvDFA} that solves $\mathtt{Cl}$ has $6$ states.
\end{proposition}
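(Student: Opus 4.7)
The plan is to establish matching upper and lower bounds of six. The upper bound follows immediately from \cref{obs:map}: under $U_\iS=\sqrt{Z}$ and $U_\iH=H$, the orbit $\Set{U_w\ket{+}\given w\in \Sigma^\ast}$ reduces, modulo global phase, to the six stabilizer states $\Set{\ket{+},\ket{-},\ket{0},\ket{1},\ket{+_y},\ket{-_y}}$, yielding a pvDFA on six states that solves $\mathtt{Cl}$.

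For the lower bound, I would first note that any two words $w_1,w_2$ with $U_{w_1}\ket{+}=U_{w_2}\ket{+}$ produce identical quantum states under every extension and therefore must occupy the same state of a \emph{minimal} pvDFA. Consequently, the reachable states of a minimal pvDFA are in bijection with the blocks of a partition $\Pi$ of the six-state orbit. Determinism of the transitions forces $\Pi$ to be invariant under the permutations of the orbit induced by $\sqrt{Z}$ and $H$. Moreover, since $\eps\in\mathtt{Cl}_\y$ and $\iS^2\in\mathtt{Cl}_\n$, the blocks containing $\ket{+}$ and $\ket{-}$ carry the labels \emph{accept} and \emph{reject}, respectively, and in particular must be distinct.

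The main step is an enumeration of all $\{\sqrt{Z},H\}$-invariant partitions of the orbit. Since $\sqrt{Z}$ and $H$ generate the single-qubit Clifford group, which acts transitively on the six states, every invariant partition has equal block sizes dividing $6$. The key computational fact is that $\sqrt{Z}$ acts as the $4$-cycle $(\ket{+},\ket{+_y},\ket{-},\ket{-_y})$, with $\ket{0}$ and $\ket{1}$ as fixed points. A single block of size $6$ trivially groups $\ket{+}$ with $\ket{-}$. Blocks of size $3$ are ruled out because the block of $\ket{+}$ must be closed under the $4$-cycle and hence contain four elements. For blocks of size $2$, I split on whether $\ket{0}$ and $\ket{1}$ share a block: if they do, the only $\sqrt{Z}$-invariant pairing of $\{\ket{+},\ket{+_y},\ket{-},\ket{-_y}\}$ (forced by $\sqrt{Z}^2 = (\ket{+}\,\ket{-})(\ket{+_y}\,\ket{-_y})$) is the MUB pairing, which again groups $\ket{+}$ with $\ket{-}$; if they do not, each of their block-mates would need to be $\sqrt{Z}$-fixed, which no element of $\{\ket{+},\ket{+_y},\ket{-},\ket{-_y}\}$ is. Hence the only admissible invariant partition that separates $\ket{+}$ from $\ket{-}$ is the one into six singletons.

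I expect the most delicate step to be the size-$2$ case analysis, but the cycle structure of $\sqrt{Z}$ makes it short; the action of $H$ is not even required to close the argument, which matches the intuition that the six-state diagram in \cref{fig:shautomaton} is already forced by the $S$-generator alone.
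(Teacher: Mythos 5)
Your upper bound is fine and coincides with the paper's (both invoke \cref{obs:map} to turn the six-element orbit of $\ket{+}$ under $\langle \sqrt{Z},H\rangle$ into a $6$-state \ac{pvDFA}). The lower bound, however, has a genuine gap at its very first step: the claim that the reachable states of a minimal \ac{pvDFA} solving $\mathtt{Cl}$ are in bijection with the blocks of a partition of the six-state orbit. This presupposes that the run map $w\mapsto\delta(s_0,w)$ of an arbitrary candidate \ac{pvDFA} factors through the orbit map $w\mapsto U_w\ket{+}$, i.e., that the automaton is a congruence quotient of the orbit automaton. That is a Myhill--Nerode-type statement which is not available for promise problems: correctness only constrains the labels of states reached by words \emph{inside} $\mathtt{Cl}_\y\cup\mathtt{Cl}_\n$, so a small \ac{pvDFA} may (i) send two words with the same quantum state to different states, and, more damagingly, (ii) send words with \emph{different} quantum states to the same state without that identification being closed under the transitions. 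Your enumeration of $\{\sqrt{Z},H\}$-invariant block systems therefore rules out only a tiny subfamily of the candidate automata with $\abs{S}\leq 5$, not all of them.

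To see that (ii) is a real threat and not a technicality: no suffix $v$ maps both $\ket{0}$ and $\ket{+_y}$ into $\{\ket{+},\ket{-}\}$ (this would force $\abs{\bk{0}{+_y}}\in\{0,1\}$, contradicting $\abs{\bk{0}{+_y}}=1/\sqrt{2}$), so words reaching $\ket{0}$ and words reaching $\ket{+_y}$ are never simultaneously constrained by the promise. A hypothetical $5$-state \ac{pvDFA} could thus ``reuse'' one state for both, with outgoing transitions that correspond to no single orbit element; your congruence-closure argument (which would collapse everything) does not apply to such an automaton because it need not be a quotient at all. Excluding these genuinely unstructured candidates is precisely the hard part, and it is why the paper's proof in \cref{app:DFA_sh} proceeds by exhaustive computer enumeration of all \acp{pvDFA} with $\abs{S}\leq 6$ against the restriction of $\mathtt{Cl}$ to words of length at most $8$ (length $7$ already kills all automata with $\abs{S}\leq 5$). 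If you want a pen-and-paper lower bound, you would need a fooling-set or distinguishability argument tailored to the promise, not a block-system classification.
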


\begin{lstlisting}[language=Python, caption={Restricted $\mathtt{Cl}$ to length up to 8.},label=lst:cl]
Cl_y = ['', 'hh', 'hsh', 'shs', 'hssh', 'hhhh', 'ssss', 'hhhsh', 'hhshs', 'hshhh', 'shhhs', 'shshh', 'hsssh', 'hshhsh', 'shhsss', 'hhhhhh', 'sshhss', 'sssshh', 'hhhssh', 'hshshs', 'hhssss', 'ssshhs', 'hssssh', 'hsshhh', 'shshsh', 'shsshs', 'shshhhh', 'hhhhhsh', 'hssshhh', 'sshshss', 'ssssshs', 'hhshshh', 'shsssss', 'hshssss', 'shshssh', 'hhhsssh', 'hsssssh', 'hhhshhh', 'hhhhshs', 'hsshhsh', 'hsshshs', 'ssshsss', 'hhshhhs', 'shhhshh', 'hshhhhh', 'hshhssh', 'sssshsh', 'shhhhhs', 'ssshhshh', 'shhhshsh', 'shshshhh', 'sshhshhs', 'sssshhhh', 'shhhsshs', 'hshhshhh', 'hshshshh', 'hssshhsh', 'hshhsssh', 'shhhhsss', 'shsshshh', 'shhssshh', 'hshhhshs', 'ssshhhhs', 'shsshhhs', 'hhhhhssh', 'hshshhhs', 'sssshssh', 'sshsshss', 'hhshshsh', 'hssssssh', 'hhshhsss', 'shhshhss', 'hhsssshh', 'hhhshshs', 'hssshshs', 'hsssshhh', 'hhssshhs', 'sshhhhss', 'ssssssss', 'shhsshhs', 'shshsssh', 'hsshhhhh', 'hhshsshs', 'hhhhhhhh', 'shshhshs', 'shshhhsh', 'hhhsshhh', 'hshhhhsh', 'hhhshhsh', 'hsshhssh', 'hsshssss', 'sshhsshh', 'hhhhssss', 'hhhssssh', 'hhsshhss']
Cl_n = ['ss', 'shhs', 'hhss', 'sshh', 'sshsh', 'ssshs', 'shsss', 'hshss', 'sshssh', 'hsshss', 'shhshh', 'sshhhh', 'hhhhss', 'hhsshh', 'ssssss', 'hhshhs', 'shhhhs', 'ssshhhs', 'hhhshss', 'shhhsss', 'hhshsss', 'hhssshs', 'shsshhs', 'hshhhss', 'shhshsh', 'sshhhsh', 'shshhss', 'sshhshs', 'hshshhs', 'hssshss', 'sshshhh', 'hshsshh', 'shssshh', 'hhsshsh', 'ssshshh', 'sshsssh', 'shhsshs', 'hhsshssh', 'sshhhhhh', 'ssshhsss', 'hhssssss', 'hshssshs', 'hsshsshh', 'shhshssh', 'hhsshhhh', 'sshshhsh', 'shhhhshh', 'hhhhhhss', 'ssshshsh', 'sshssssh', 'hsshhhss', 'hhhsshss', 'shhhhhhs', 'hhhhsshh', 'shsshsss', 'sshsshhh', 'hshsshsh', 'sshhssss', 'sshshshs', 'hsshshhs', 'hhshhshh', 'sshhhssh', 'hhhhshhs', 'shhsssss', 'ssshsshs', 'ssssshhs', 'sssshhss', 'shsssshs', 'shshshss', 'shssshsh', 'hhshhhhs', 'sssssshh', 'hsssshss', 'hshhshss', 'hshshsss', 'shhshhhh']
\end{lstlisting}

\begin{proof}
    The $6$-state \ac{pvDFA} depicted in \cref{fig:shautomaton} clearly solves the promise problem $\mathtt{Cl}$, since it is obtained from the mapping of a \ac{QFA} solving $\mathtt{Cl}$ (see~\cref{obs:map}).
    We demonstrate the minimality of this $\abs{S}=6$ \ac{pvDFA} solution by enumerating over all possible \acp{pvDFA} with $\abs{S}\leq 6$ and verifying whether each candidate \ac{pvDFA} solves a restricted promise problem, which includes all sequences up to length $8$ of $\mathtt{Cl}$ (see Listing~\ref{lst:cl}). This verification is conducted via a computer program due to the large search space.
    
    We observe that input sequences up to length $7$ exclude all \acp{pvDFA} with $\abs{S}\leq 5$ as valid solutions to $\mathtt{Cl}$, while including sequences of length $8$ excludes all with $\abs{S}=6$ \ac{pvDFA} except the one with the same structure as the \ac{pvDFA} in \cref{fig:shautomaton}. 
    By the latter, we mean that for each \ac{pvDFA} we can find a permutation of states which connects it to that \ac{pvDFA} solution.
\end{proof}

\section{Minimal pvDFA solving ${\mathtt{N\text{-}EO}^k}$}~\label{app:multiqubit}

\begin{proposition}
    A minimal \ac{pvDFA} that solves $\operatorname{\mathtt{N-EO}^k}$ has $2^{N(k+1)}$ states.
\end{proposition}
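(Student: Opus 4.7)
The plan is to prove the matching lower bound $|S|\geq 2^{N(k+1)}$, the upper bound being witnessed by the $N$-fold commutative product of the canonical $\mathtt{EO}^k$-solving \ac{pvDFA} from \cref{app:EO_proof}, which has state set $\mathbb{Z}_{2^{k+1}}^N$ and transitions $T_j\colon c\mapsto c+e_j$. My strategy is induction on $N$, with base case $N=1$ reducing directly to \cref{res:EO} as proved in \cref{app:EO_proof}.

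For the inductive step, assume the claim holds for $\mathtt{(N-1)\text{-}EO}^k$ and let $\A=(S,\Sigma,\delta,s_0,(S_b)_b)$ be a \ac{pvDFA} solving $\mathtt{N\text{-}EO}^k$. My first move is to analyse the $\sigma_N$-orbit of the initial state: restricted to words in $\sigma_N^\ast$, the promise collapses to an isomorphic copy of $\mathtt{EO}^k$ (the two labels being $(0,\dots,0)$ and $(0,\dots,0,1)$), so by \cref{res:EO} this orbit contains a cycle of length at least $2^{k+1}$, whose states I denote $s_m := T_N^m(s_0)$ for $m\in\{0,1,\dots,2^{k+1}-1\}$. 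Next, for each such $m$ I consider the \emph{slice} $R_m\subseteq S$ of states reachable from $s_m$ using only symbols from $\Sigma\setminus\{\sigma_N\}$. When $m\in\{0,2^k\}$, the sub-automaton on $R_m$ solves $\mathtt{(N-1)\text{-}EO}^k$ with labels fixed by the $N$-th coordinate to $m/2^k$, and so by the induction hypothesis $|R_m|\geq 2^{(N-1)(k+1)}$.

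The crux of the argument is to show that the slices $R_0,R_1,\dots,R_{2^{k+1}-1}$ are pairwise disjoint, yielding $|S|\geq\sum_m |R_m|\geq 2^{k+1}\cdot 2^{(N-1)(k+1)}=2^{N(k+1)}$. Pairwise disjointness is immediate for $m,m'\in\{0,2^k\}$ with $m\neq m'$, because the labelled states in $R_m$ carry labels with last coordinate $m/2^k$, which differ across slices and a state cannot belong to two disjoint $S_b$. For a slice at $m\notin\{0,2^k\}$, the states have no direct promise constraint; here my plan is a propagation argument: if a state $s$ were shared between $R_m$ and $R_{m'}$, then for any suffix of the form $w'\sigma_N^a$ applied to $s$, the two incoming words would end up in the same state, and choosing $a$ so that $m+a\in\{0,2^k\}$ places us back in a promise-labelled position where the induction hypothesis together with the $\mathtt{EO}^k$ cycle-length analysis forces $m\equiv m'\pmod{2^{k+1}}$.

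The main obstacle is precisely this last step for $m\notin\{0,2^k\}$, because a general \ac{pvDFA} need not have commuting transitions $T_j$, so the orbit structure away from labelled positions is a priori uncontrolled. I expect to handle it by observing that the right-congruence induced by $\delta$ carries any hypothetical identification of states in $R_m$ and $R_{m'}$ along $\sigma_N^a$ to an identification of states in $R_{m+a}$ and $R_{m'+a}$; choosing $a$ so that both $m+a$ and $m'+a$ lie in $\{0,2^k\}$ is possible exactly when $m-m'\in\{0,2^k\}\pmod{2^{k+1}}$, which together with the subgroup structure of $\mathbb{Z}_{2^{k+1}}$ (every nontrivial subgroup contains the unique element $2^k$ of order two) closes the argument by reducing the non-promise case to the already-handled labelled one.
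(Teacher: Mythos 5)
Your upper-bound construction matches the paper's (the $N$-fold product of counters mod $2^{k+1}$), but your lower bound has a genuine gap in the counting. The final inequality $|S|\geq\sum_m|R_m|\geq 2^{k+1}\cdot 2^{(N-1)(k+1)}$ requires $|R_m|\geq 2^{(N-1)(k+1)}$ for \emph{every} $m\in\{0,\dots,2^{k+1}-1\}$, but you only establish this for $m\in\{0,2^k\}$, where the induction hypothesis applies because the slice inherits a labelling. For the remaining $2^{k+1}-2$ slices no word with $\sigma_N$-count $\equiv m$ lies in the promise, so the sub-automaton on $R_m$ solves nothing and the induction hypothesis is silent; as written your argument only yields $|S|\geq 2\cdot2^{(N-1)(k+1)}+(2^{k+1}-2)$, far short of the claim. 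Moreover, the natural fix --- distinguishing two non-$\sigma_N$ words $w_1',w_2'$ with count classes $c_1\neq c_2$ that merge inside $R_m$ by appending a suffix landing both in the promise --- fails whenever $c_1-c_2\notin\{0,2^k\}^{N-1}$, since no shift $d$ places both $c_1+d$ and $c_2+d$ in $\{0,2^k\}^{N-1}$ (already for $N-1=1$, $k=1$, $c_1=0$, $c_2=1$ there is no valid $d$ modulo $4$). The same defect undermines your disjointness ``propagation'': your own reduction covers only $m-m'\in\{0,2^k\}$, and the appeal to the subgroup structure of $\mathbb{Z}_{2^{k+1}}$ presupposes that the set of collapsible differences is closed under addition, which would need the transitions $T_j$ to commute or the right congruence induced by $\delta$ to be two-sided --- neither holds for a general \ac{pvDFA}.

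The paper takes a different and more direct route: reachability of all $2^{N(k+1)}$ count classes plus pigeonhole gives two words $p_1,p_2$ with distinct count labels $l_1\neq l_2$ merging into one state; then $p_1p_2^{-1}$ and $p_2p_2^{-1}$ reach the same state while having counts $l_1-l_2\neq 0$ and $0$, and the problematic case where $l_1-l_2\notin\{0,2^k\}^N$ (exactly the case your proposal cannot handle) is resolved by repeating the difference word up to $2^k$ times so that its count enters $\{0,2^k\}^N\setminus\{0\}$, forcing a labelled contradiction. If you want to salvage your inductive slice decomposition, you would need to import precisely this repetition argument to lower-bound the unlabelled slices and to separate slices whose indices differ by an element outside $\{0,2^k\}$; at that point the induction buys you little over the paper's one-shot argument.
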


\begin{proof}
    First, we devise a \ac{pvDFA} with $2^{N(k+1)}$ that solves the promise problem. We notice that $\operatorname{\mathtt{N-EO_k}}$ can be solved by counting the number of each letter $\sigma_i$ in modulo $2^{k+1}$. The modulo counting is implemented by a cyclic transition of length $2^{k+1}$. Then, we can label each state with the current gate counts. To represent all possible gate counts of $N$ different gates in $\operatorname{\mathtt{N-EO_k}}$, we can use $N$ base $2^{k+1}$ numbers. If we include all possible $2^{N(k+1)}$ count instances, we can faithfully process all possible input sequences, even beyond the promised inputs.
    Next, we prove there is no smaller \ac{pvDFA} that solves the promise problem. If $|S|<2^{N(k+1)}$, there is at least one state $s$ that has at least two inequivalent count labels, $l_1,l_2$. Let's call the input sequence leading to $s$ with count $l_1$ and $l_2$, $p_1$ and $p_2$, respectively. We call the inverse of an input sequences $p \notin \operatorname{\mathtt{N-EO}^k_{00\dots0}}$ to be sequences such that  $p^{-1}\cdot p \in  \operatorname{\mathtt{N-EO}^k_{00\dots0}}$. Then, $p_1 \cdot p_2^{-1}$ is an input that reaches a state that is from the category $S_{00\dots0}$. However, generally $p_1 \cdot p_2^{-1} \notin \operatorname{\mathtt{N-EO}^k_{00\dots0}}$ causing a contradiction. The only case where this would be true is if $p_1 = p_2$, which we have excluded. The last step is to repeat the sequence $p_1 \cdot p_2^{-1}$ such that this generic sequence becomes compatible with the promised inputs. This can be achieved in the worst case with $2^k$ repetitions of $p_1 \cdot p_2^{-1}$. Note that it may happen that $2^k\cdot(p_1 \cdot p_2^{-1}) \in \operatorname{\mathtt{N-EO}^k_{00\dots0}}$. Then, repeat only $2^{k-1}$ times.
\end{proof}

\section{Minimal pvDFA solving $\mathtt{GEO}^r_{q}$ }~\label{app:GEORQ}
\begin{proposition}
    A \ac{pvDFA} that solves $\mathtt{GEO}^r_q$ requires at least $q^{m_q + 1}$ states,  where $m_q$ is the highest power of $q$, such that $q^{m_q}$ divides $r$.
\end{proposition}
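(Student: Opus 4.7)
My plan is to adapt the tail-plus-loop argument from \cref{app:EO_proof}. First I would recall that any unary pvDFA on $\Sigma = \{\sigma\}$ consists of a tail of length $t$ followed by a loop of length $l$, with $\abs{S} = t + l$ and $l \geq 1$; the state reached on input $\sigma^k$ is $s_k$ if $k < t$, and $s_{t + ((k-t)\bmod l)}$ otherwise. Since the accept/reject languages contain arbitrarily long words, only the loop governs the asymptotic behaviour, so the bound on $\abs{S}$ will come from a lower bound on $l$.

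Second, I would reduce correctness to a coset-disjointness condition in $\mathbb{Z}/l\mathbb{Z}$. For each label $j \in \{0, \dots, q-1\}$, the inputs $\sigma^{ir}$ with $i \equiv j \pmod q$ and $ir \geq t$ visit loop positions forming the shifted coset $(jr - t) + \langle qr \rangle$; since a rigid shift preserves disjointness, correctness forces the $q$ cosets $C_j = jr + \langle qr \rangle \subset \mathbb{Z}/l\mathbb{Z}$ to be pairwise disjoint. Using $\langle qr \rangle = \gcd(qr, l)\,\mathbb{Z}/l\mathbb{Z}$, this is equivalent to the number-theoretic requirement
\begin{equation}
    \gcd(qr, l) \nmid \delta r \quad \text{for all } \delta \in \{1, \dots, q-1\}.
\end{equation}

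Third, I would analyse this condition via $q$-adic valuations. Writing $r = q^{m_q} r'$ and $l = q^n l'$ with $\gcd(r', q) = \gcd(l', q) = 1$, one gets $v_q(\gcd(qr, l)) = \min(m_q + 1, n)$ and, using the primality of $q$ so that $v_q(\delta) = 0$ for $1 \leq \delta < q$, $v_q(\delta r) = m_q$. For primes $p \neq q$, $v_p(\gcd(qr, l)) \leq v_p(r) \leq v_p(\delta r)$, so no $p \neq q$ can witness non-divisibility. The condition therefore reduces to $\min(m_q + 1, n) > m_q$, i.e.\ $n \geq m_q + 1$, giving $q^{m_q + 1} \mid l$ and hence $\abs{S} \geq l \geq q^{m_q + 1}$.

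The main obstacle I anticipate is making the asymptotic step fully rigorous: one must verify that for each residue class $j$, the loop-positions actually realized by the relevant inputs coincide with the entire coset $C_j$. This follows because $\{qkr \bmod l : k \in \Znn\}$ is already the full subgroup $\langle qr \rangle$ (by eventual periodicity of the orbit in $\mathbb{Z}/l\mathbb{Z}$), and excluding the finitely many initial $k$ with $qkr + jr < t$ removes no residue from the coset. A matching construction—a pure loop of length $q^{m_q+1}$ where $s_{ir \bmod q^{m_q+1}}$ carries the label $i \bmod q$—is well defined precisely by the disjointness just established, closing the gap between the lower bound and the minimal automaton.
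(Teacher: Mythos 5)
Your proof is correct and follows essentially the same route as the paper's: decompose the unary \ac{pvDFA} into a tail plus a loop of length $l$, reduce correctness to disjointness of the residue classes $\{jr+\langle qr\rangle\}_{j}$ modulo $l$, and show arithmetically that this forces $q^{m_q+1}\mid l$. Your $\gcd$/$q$-adic-valuation formulation is in fact a tighter rendering of the lower-bound step, which the paper treats only briefly, so no changes are needed.
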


\begin{proof}
    Analogous to the proof of \cref{res:EO}. If $r$ and $q$ are co-prime, the ${\bmod \,q}$ operation on the input sequence length is enough to differentiate $\mathtt{GEO}^r_{q,0}, \dots,\mathtt{GEO}^r_{q,q-1}$. This is no longer the case if $q$ occurs in the prime factors of $r$. Express $r = r'\cdot q^{m_q}$, we are looking for $l$ such that 
    \begin{equation}
        \bigcap_{j=0}^{q-1}\{ (n\cdot q + j)\cdot r'\cdot q^{m_q} \bmod{l}\} = \emptyset,
    \end{equation}
    pairwise disjoint for all $n\in \mathbb{N}$. Pairwise disjoint sets are given by $ l = q^{m_q + 1}$: 
    \begin{equation*}
      \bigcap_{i=0}^{q-1} \{ (n\cdot q + i)\cdot r'\cdot q^{m_q} \equiv i\cdot m\ (\bmod\; q^{m_q+1})\}  
    \end{equation*}
    Left to show is that $0 \not\equiv m \not\equiv \dots \not\equiv (q-1)\cdot m$. Note that $ a \equiv b \bmod c \iff k a \equiv k b \bmod k c$. Hence, we can reduce $m \equiv r'\cdot q^{m_q} (\bmod q^{m_q+1})$ to $m' \equiv r' (\bmod q)$. For multiples of $r'$, it holds
    \begin{align*}
        u \cdot r' \equiv v \cdot r'  \Rightarrow (u - v)\cdot r' \equiv 0 ,
    \end{align*}
    and $r'$ and $q$ co-prime by definition. For $u,v < q$, the only solution is $u = w$. Which shows that $\bmod{q^{m_q + 1}}$ solves the promise problem. Any smaller choice of the modulus $l$ causes a contradiction because $l\cdot r \equiv 0$ but $\sigma^{l\cdot r} \not\in \texttt{GEO}^r_0$ unless $q | l$, which would, however, map all sequences to $0$.
\end{proof}

\section{Minimal pvDFA solving $\mathtt{EO}^{k}_\imax$}~\label{app:KLEO}

\begin{proposition}
    A minimal \ac{pvDFA} that solves $\mathtt{EO}^k_\imax$ has $\min\{\imax+1,2^{k+1}\}$ states.
\end{proposition}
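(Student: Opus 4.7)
The plan is to prove matching upper and lower bounds on $|S|$ by exploiting the canonical rho-structure of any unary pvDFA: a tail of length $t$ feeding into a loop of length $l$, giving $|S| = t + l$ states, as already established in \cref{app:EO_proof}.

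For the upper bound, I would split into cases. If $\imax+1 \geq 2^{k+1}$, the $2^{k+1}$-state pvDFA for the unrestricted $\mathtt{EO}^k$ from \cref{app:EO_proof} also solves $\mathtt{EO}^k_\imax$. If $\imax+1 < 2^{k+1}$, I would construct a pvDFA with exactly $\imax+1$ states in one of two sub-cases: if $\imax$ is even, a simple cycle of length $\imax+1$; if $\imax$ is odd, a one-state tail followed by an odd cycle of length $\imax$. In both sub-cases the cycle length is odd and hence coprime to $2^k$, so multiplication by $2^k$ acts as a bijection modulo the cycle length, sending the promised-word indices $i\in\{0,\ldots,\imax\}$ to pairwise distinct loop states (together with the tail state $s_0$ in the odd-$\imax$ case, which absorbs $i=0$). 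These states are then consistently labeled by the parity of $i$.

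For the lower bound, I would analyze which rho-parameters $(t, l)$ are feasible. Letting $I := \{\lceil t/2^k\rceil, \ldots, \imax\}$ denote the set of promised-word indices whose corresponding words reach the loop, and setting $L := l/\gcd(l, 2^k)$ as the effective period for multiples of $2^k$, two indices $i_1, i_2 \in I$ land at the same loop state iff $L \mid (i_1-i_2)$, in which case consistent labeling forces $i_1 \equiv i_2 \pmod 2$. This splits into two scenarios: (A) $L$ is even, which forces $\nu_2(l) > k$ and hence $l \geq 2^{k+1}$, giving $|S|\geq 2^{k+1}$; or (B) $L$ is odd, in which case $L$ must exceed every nonzero difference within $I$, yielding $L \geq |I|$, and therefore $|S| = t + l \geq t + |I| = t + \imax - \lceil t/2^k\rceil + 1 \geq \imax+1$ via the elementary inequality $t \geq \lceil t/2^k\rceil$. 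Combining both scenarios gives $|S| \geq \min\{\imax+1, 2^{k+1}\}$, matching the construction.

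The hard part will be the lower bound's case analysis, especially in scenario (B): one must verify the degenerate edge case $|I| \leq 1$ (where almost all promised words sit in the tail and the tail contribution alone must account for $\imax+1$ distinguishable states), and keep the bookkeeping through the $\lceil t/2^k\rceil$ term clean across regimes where $t$ is or is not a multiple of $2^k$. A secondary subtlety in the upper bound is checking bijectivity in the odd-$\imax$ construction, which follows from $\gcd(2^k, \imax) = 1$.
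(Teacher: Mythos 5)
Your argument is correct and takes essentially the same route as the paper's proof: both decompose a unary \ac{pvDFA} into a tail feeding a loop, reduce the analysis to the effective period $l/\gcd(l,2^k)$ of the loop under steps of size $2^k$ (the paper's $l_o$, your $L$), invoke the parity conflict when that period is odd, and use a length-one tail to realize the $(\imax+1)$-state construction when $\imax$ is odd. Your treatment of the lower bound for general tail lengths $t$ (via $|S|=t+l\geq t+|I|$ and $t\geq\lceil t/2^k\rceil$) is more explicit than the paper's, but the underlying ideas coincide.
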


\begin{proof}
    Since $\mathtt{EO}^k_{\y,\imax} \subset \mathtt{EO}^k_\y$, and $\mathtt{EO}^k_{\n,\imax} \subset \mathtt{EO}^k_\n$, the \ac{pvDFA} with $|S| = 2^{k + 1}$ solves the promise problem as per~\cref{res:EO}. We consider the case $\imax < 2^{k+1}-2$. As before, examine the two input sets regarding their length per input sequence in ascending order
    \begin{align}
    \mathtt{EO}^{k}_{\imax,\y} &= \{0, 2\cdot2^k, \dots \} \\
        \mathtt{EO}^{k}_{\imax,\n} &= \{1\cdot2^k, 3\cdot2^k, \dots \}.
    \end{align}
    The promise problem is solved if the \ac{pvDFA} $\mathcal{A}$ maps the two distinct sets into distinct sets $\mathcal{A}(\mathtt{EO}^{k}_{\imax,\y}) \cap \mathcal{A}(\mathtt{EO}^{k}_{\imax,\n}) = \emptyset $.
    Let's assume first that $\mathcal{A}$ is a tail-free \ac{pvDFA} with cycle length $l=\imax+1$. Then,
    \begin{align}
        \mathcal{A}(\mathtt{EO}^{k}_{\imax,\y}) &= \{0, 2\cdot m, \dots \} \\
        \mathcal{A}(\mathtt{EO}^{k}_{\imax,\n}) &= \{1\cdot2^k \bmod l  =: m, 3\cdot m, \dots \}.
    \end{align} 
    These outputs induce another cycle on the \ac{pvDFA} states $0 \rightarrow 1\cdot m \rightarrow 2\cdot m \rightarrow 3 \cdot m \rightarrow \dots$, where $\mathtt{EO}^{k}_{\imax,\y}$-states and $\mathtt{EO}^{k}_{\imax,\n}$-states are visited alternately. If $l = 2\cdot 2^k$ and, for simplicity, $l \nmid 2^k$, we know that the first $l_o$ sequences can be correctly represented. Odd $l_o$ would cause a conflict at the $l_o+1$-th sequence.  Since this output cycle is defined on $l$ states with step size $m$, the cycle length is the smallest combination of $p,l_o$ such that $p\cdot l = l_o\cdot m$. This means for $\gcd(l,m)=g$ we have $p \cdot g \cdot l' = l_o \cdot g \cdot m'$, $l_o = l'$. Note that $l_o$ is always odd -- the only scenario of $l_o$ even would occur if $l$ is some even number $l = 2^i\cdot l'$ and $\gcd(l,m)$ odd or of lower multiplicity than $i$ of the prime factor $2$. However since $2^k = n \cdot l + m \Rightarrow 2^i(2^{k-i} + l') = m$ which means $2^i$ is a common divisor of both $m$ and $l$ and thus $l_o$ odd. Hence, such a \ac{pvDFA} can only accommodate a promise problem with up to $l_o$ sequences.
    
    Next, we will see that in case $l_o \neq l$, we can increase the maximum sequence number to $|S|$ by adding a tail of length $t = 1$, such as $l = |S| - 1 $. From $2^k = n \cdot g \cdot l' + g \cdot m' = n\cdot g (l' + m')$ we see that $\gcd(l,m)$ can only be power of $2$. Hence, if  $\gcd(l,m) \neq 1 \Rightarrow \gcd(l-1, \tilde{m}) = 1$. Since the single tail-state can correctly recognize the empty string, such a \ac{pvDFA} can account for $|S|$ sequences in total. Thus, a promise problem with $\imax < 2^{k+1} - 1$ can be solved with a \ac{pvDFA} with $|S| = \imax$ states.
    \end{proof}

\begin{corollary}
If $\imax\leq 2^{k+1}$ and $\imax$ is odd, it is enough to consider
\begin{equation}\begin{split}
          \mathtt{EO}^{k}_{\imax,\y} &= \{ \sigma^{0}, \sigma^{2 \cdot 2^k}\} \\
         \mathtt{EO}^{k}_{\imax,\n} &= \{ \sigma^{i2^k} | i \in \Znn, i\equiv 1\bmod 2, i \leq \imax\}
\end{split}\end{equation}
and if $i_{max}$ even, then it is enough to consider
\begin{equation}\begin{split}
          \mathtt{EO}^k_{\imax,\y} &= \{ \sigma^{i2^k} | i \in \Znn, i\equiv 0\bmod 2, i \leq \imax \}   \\
          \mathtt{EO}^{k}_{\imax,\n} &= \{ \sigma^{1 \cdot 2^k} \}.
\end{split}\end{equation}
\end{corollary}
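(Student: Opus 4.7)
The plan is to establish that restricting the yes-words (odd $\imax$) or no-words (even $\imax$) as in the corollary statement still forces any pvDFA solving the reduced problem to use at least $\min\{\imax+1, 2^{k+1}\}$ states, matching the preceding proposition's bound. One direction is immediate: since the reduced yes and no languages are subsets of those of $\mathtt{EO}^k_\imax$, any pvDFA solving the full restricted problem also solves the reduced one, so the minimum cannot increase. All work lies in proving the matching lower bound.

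For the lower bound, I would exploit the unary-alphabet structure, writing every candidate pvDFA as a tail of length $t$ followed by a cycle of length $l$, so $|S| = t + l$ and $\sigma^n$ maps to state $n$ for $n < t$ and to $t + (n-t) \bmod l$ otherwise. In the odd-$\imax$ case, the two yes-words $\sigma^0, \sigma^{2\cdot 2^k}$ together with the no-words $\sigma^{(2j+1)\cdot 2^k}$ must sit on disjoint accept/reject subsets. Assuming first $t \leq 2^k$ so that all no-words and $\sigma^{2\cdot 2^k}$ lie in the cycle, the key disjointness $f(2\cdot 2^k) \neq f((2j+1)\cdot 2^k)$ translates into $l \nmid (2j-1)\cdot 2^k$ for every odd integer $2j-1 \in [1,\imax-2]$. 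Writing $l = 2^a c$ with $c$ odd, this forces either $a > k$ (hence $l \geq 2^{k+1}$) or $c$ to divide no odd integer in $[1,\imax-2]$; since the smallest positive multiple of $c$ is $c$ itself, the latter yields $c \geq \imax$ and hence $l \geq \imax$. Combined with $t \geq 1$, this already gives $|S| \geq \imax + 1$.

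For the tail-free subcase ($t = 0$) there is the additional constraint $f(0) \neq f((2j+1)\cdot 2^k)$, forcing $l \nmid (2j+1)\cdot 2^k$ for $j \geq 0$; together these push $c \geq \imax + 2$, so $|S| = l \geq \imax + 2 \geq \imax + 1$. The even-$\imax$ case proceeds by a symmetric argument: the single no-word $\sigma^{2^k}$ must be distinguished from all yes-words $\sigma^{i\cdot 2^k}$ with $i$ even, yielding $l \nmid q\cdot 2^k$ for odd $q \in [1, \imax-1]$; since $\imax-1$ is odd here, the analogous analysis gives $c \geq \imax+1$ and $l \geq \imax+1$, so again $|S| \geq \imax + 1$.

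The main obstacle I anticipate is the boundary regime $t > 2^k$, in which the promise-relevant positions (starting with $\sigma^{2^k}$ itself) migrate into the tail and the cycle-divisibility bookkeeping no longer applies directly. Fortunately, this case turns out to be benign: the tail alone already contributes $t \geq 2^k + 1$ distinct states, and the remaining cycle-based disjointness analysis still forces $l$ to be at least $\imax$ or $\imax+1$, so that $|S|$ comfortably exceeds $\imax+1$ whenever $\imax \leq 2^{k+1}$. Assembling these subcases finishes the argument and establishes the corollary.
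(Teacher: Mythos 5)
Your proposal is correct in substance but argues along a different line than the paper. The paper's proof is a short appeal to the structure already extracted in the proof of the preceding proposition: there, the first collision in an $(\imax+1)$-state unary automaton occurs only when the $(\imax+2)$-th sequence lands on the state of the \emph{shortest} non-empty word of the opposite set, so deleting the longer words of that opposite set cannot relax the binding constraint; the paper then adds the observation (which your $f(0)$-constraints capture implicitly) that $\sigma^0$ must be retained in the odd case, since otherwise two states suffice. You instead re-derive the lower bound from scratch for the reduced languages via the tail-plus-cycle decomposition and the divisibility conditions $l \nmid q\,2^{k}$ for the relevant odd $q$, splitting $l = 2^{a}c$; this is more self-contained and makes the ``$a>k$ versus $c$ large'' dichotomy explicit, at the cost of more case analysis. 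The one place your write-up needs repair is the regime $t > 2^{k}$: your stated justification, that ``the remaining cycle-based disjointness analysis still forces $l$ to be at least $\imax$ or $\imax+1$,'' is not true in general --- once the tail absorbs the accept word $\sigma^{2\cdot 2^{k}}$ (odd case) or the lone reject word $\sigma^{2^{k}}$ (even case), all remaining cycle words carry the same label and $l=1$ becomes admissible. The conclusion survives, but for a different reason: in that regime $t$ itself is already large enough (one must separately treat $2^{k} < t < 2^{k+1}$, where the cycle constraints still apply to $\sigma^{2\cdot2^{k}}$, and $t \ge 2^{k+1}$, where the tail alone exceeds the target bound). You should also note that both your argument and the corollary as literally stated presuppose $\imax \ge 3$ in the odd case, since for $\imax=1$ the reduced yes-language introduces the word $\sigma^{2\cdot2^{k}}$, which is absent from $\mathtt{EO}^{k}_{1}$ and strictly increases the required number of states.
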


\begin{proof}
    In the proof of \cref{theorem:finitepp}, we established that $\mathtt{EO}^{k}_{\imax}$ for appropriately small $\imax$ can be solved with a \ac{pvDFA} with $|S| = \imax+ 1$ states because only the $\imax+ 2$-th sequence would cause a contradiction in the \ac{pvDFA} outputs since it will map onto the same state as the first sequence from the opposite input set. This also means that omitting sequences from the opposite input set larger than the shortest non-empty sequence would not change the required \ac{pvDFA} dimension. Note that for the odd $i_{max}$, only keeping the non-empty sequence would allow for a two-state \ac{pvDFA} to solve the promise problem, since the problem reduces to differentiating between an empty sequence and a non-empty sequence.
\end{proof}

\section{Optimal success probability of pvPFAs for $\mathtt{EO}^1_\imax$}\label{app:proof_pfa}

In order to certify the presence of quantum advantages, \ac{QFA} failing probability must be lower than $p_C$ for two-dimensional \acp{pvPFA}.

Using \cref{eq:pfa_accprob,eq:pfail} and the $\mathtt{EO}^1_\imax = (\mathtt{EO}^{1}_{\imax,\y},\mathtt{EO}^{1}_{\imax,\n})$ promise problem, the probability of failure for its two-state \ac{pvPFA} with a single input symbol can be written as:

\begin{equation}\label{eq:pfa_pfail}
\pfail = 1 - \sum_{a\in\LL}\sum_{w \in A_a} \mu(w) \Pr[a|w] = 1 - \frac{1}{\imax + 1} \sum_{k=0}^{\imax} m_{a_k} T^{2k} \pi, \qquad T = \begin{bmatrix}x & y \\ 1 - x & 1 - y\end{bmatrix}
\end{equation}
where the distribution $\mu$ is taken to be uniform, and $a_k = ``\y"$ for $k$ even, and $a_k = ``\n"$ for odd, and $x, y \in [0,1]$ are two transition probabilities to be optimized over. Thus, we wish to find $p_C = \inf_{x,y} \pfail$. By convexity, and without loss of generality, we may choose the initial distribution $\pi = [1, 0]^\T$. For $m_a$, we have four possibilities: either $m_\y = [1,0]$,   $m_\y = [0,1]$, $m_\y = [1,1]$, or $m_\y = [0,0]$,  with $m_\n = 1 - m_\y$, corresponding to the choice of whether to start on the accept state. It is easy to see that $m_\y = [1,0]$ always provides an advantage due to the empty word term achieving $\Pr[a = \y|\eps] = 1$. Thus, it suffices to optimize over the two parameters $x,y \in [0,1]$, where the global optimum can be found to be at $x = y = 0$, corresponding to a \ac{pvDFA} which always transitions to the non-starting state for $\imax=3$ and $\imax=5$. The optimal classical probability of failure is then simply
\begin{equation}
    p_C = 1 - \frac{\Pr[a=\y|\epsilon] + |\mathtt{EO}^{1}_{\imax,\n}|}{|\mathtt{EO}^{1}_{\imax,\y}| + |\mathtt{EO}^{1}_{\imax,\n}|} = 1 - \frac{1 + |\mathtt{EO}^{1}_{\imax,\n}|}{\imax + 1} = \frac{\imax - \ceil{\imax/2}}{\imax + 1}.
\end{equation}

\section{Details on numerical optimization}\label{app:numerics}

We wish to obtain the infidelity between the target quantum model $\QM = (\ketbra{0}{0}, \{\sqrt{X}\}, \{\ketbra{a}{a}\}_{a \in \{0,1\}})$ and a model $\tilde{\QM} = (\ketbra{0}{0}, \{ \Lambda_\iS \}, \{\ketbra{a}{a}\}_{a \in \{0,1\}})$, minimized over a unitary gauge. As global phases are irrelevant, we use the $3$-parameter unitary
\begin{equation}U = \begin{bmatrix}
 \e^{\ii (\alpha + \beta )} \cos \theta  & \e^{\ii (\alpha -\beta )} \sin \theta  \\
 -\e^{-\ii (\alpha - \beta )} \sin \theta  & \e^{-\ii (\alpha +\beta )} \cos \theta  \\
\end{bmatrix},
\end{equation}
such that the state infidelity reduces to $\inFid^U_\rho = 1 - \Tr[U \kb{0}{0} U^\dagger \kb{0}{0}] = 1 - \cos^2 \theta = \sin^2 \theta$. On the other hand, from the spectrum of $U M_a U^\dagger - M_a$ it can be shown that $\mathrm{dist}^U_M = |\sin \theta|$, such that the contribution to the infidelity from the measurement dominates that of the state for any $\theta$. The overall infidelity for a given $U$ thus reduces to $\max\;\{\; \inFid^U_\text{gate},\; \abs{\sin \theta}\}$.

We sample random channels $\Lambda_\iS$ through a variation of the Kraus operator method as described in Ref.~\cite{kukulski2021}, where instead of generating random matrices from the complex Ginibre ensemble, we generate matrices where each entry is generated according to $r^2 \e^{\ii \theta}$ for uniformly sampled $r \in [0,1)$ and $\theta \in [0,2\pi)$. This was done to improve the sampling at the boundaries of the region populated by channels in the $\inFid$ vs.~$\pfail$ plots, as otherwise the majority of sampled channels would concentrate near the classical $\pfail \ge p_C$ region. For each channel, the unitary gauge was optimized via the Luus–Jaakola method~\cite{luusjaakola1973} using unitaries sampled over the Haar measure. The optimization code was written in \texttt{C++} using the Armadillo library~\cite{sanderson2025armadillo}, with the source code being available upon request.
\newline
\twocolumn
\bibliographystyle{unsrtnat}
\bibliography{citations.bib}

\end{document}